\renewcommand{\thefootnote}{\fnsymbol{footnote}}
\newcolumntype{P}[1]{>{\centering\arraybackslash}p{#1}}
\newcolumntype{M}[1]{>{\centering\arraybackslash}m{#1}}
\DeclarePairedDelimiter{\ceil}{\lceil}{\rceil}
\def\BState{\State\hskip-\ALG@thistlm}
\definecolor{myblue}{RGB}{80,80,160}
\definecolor{mygreen}{RGB}{80,160,80}
\definecolor{myempty}{RGB}{255,255,255}
\newtheorem{theorem}{Theorem}[section]
\newtheorem{corollary}{Corollary}[theorem]
\newtheorem{lemma}[theorem]{Lemma}
\def\denseformat{
	\setlength{\textheight}{9in}
	\setlength{\textwidth}{6.9in}
	\setlength{\evensidemargin}{-0.2in}
	\setlength{\oddsidemargin}{-0.2in}
	\setlength{\headsep}{10pt}
	\setlength{\topmargin}{-0.3in}
	\setlength{\columnsep}{0.375in}
	\setlength{\itemsep}{0pt}
}
\begin{document}
\title{Fast Distributed Backup Placement in Sparse and Dense Networks}


\author{Leonid Barenboim \thanks{Open University of Israel. E-mail: 
        {\tt\small leonidb@openu.ac.il}} \  , \ Gal Oren \thanks{Ben-Gurion University of the Negev, Nuclear Research Center - Negev. E-mail:
        {\tt\small orenw@post.bgu.ac.il} \newline This work was supported by the Lynn and William Frankel Center for Computer Science, the Open University of Israel's Research Fund, and ISF grant 724/15.}
}





\date{}

\def\thepage{}
\maketitle 
\vspace{-0.4cm}
\begin{abstract}
We consider the Backup Placement problem in networks in the $\mathcal{CONGEST}$ distributed setting. Given a network graph $G = (V,E)$, the goal of each vertex $v \in V$ is selecting a neighbor, such that the maximum number of vertices in $V$ that select the same vertex is minimized. The backup placement problem was introduced by Halldorsson, Kohler, Patt-Shamir, and Rawitz \cite{halldorsson2015bp}, who obtained an $O(\log n/ \log \log n)$ approximation with randomized polylogarithmic time. Their algorithm remained the state-of-the-art for general graphs, as well as specific graph topologies.
In this paper we obtain significantly improved algorithms for various graph topologies. Specifically, we show that $O(1)$-approximation to optimal backup placement can be computed {\em deterministically} in $O(1)$ rounds in graphs that model wireless networks, certain social networks, claw-free graphs, and more generally, in any graph with {\em neighborhood independence bounded by a constant}. At the other end, we consider sparse graphs, such as trees, forests, planar graphs and graphs of constant arboricity, and obtain a constant approximation to optimal backup placement in $O(\log n)$ deterministic rounds. 

Clearly, our constant-time algorithms for graphs with constant neighborhood independence are asymptotically optimal. Moreover, we show that our algorithms for sparse graphs are not far from optimal as well, by proving several lower bounds. Specifically, optimal backup placement of unoriented trees requires $\Omega(\log n)$ time, and approximate backup placement with a polylogarithmic approximation factor requires $\Omega(\sqrt {\log n / \log \log n})$ time.
Our results extend the knowledge regarding the question of "what can be computed locally?" \cite{naor1995can}, and reveal surprising gaps between complexities of distributed symmetry breaking problems.

\end{abstract}
\section{Introduction}
\label{intro1}

We consider the Backup Placement problem in networks. The problem is defined as follows. Given a network graph $G = (V,E)$, the goal of each vertex $v \in V$ is selecting a neighbor, such that the maximum number of vertices in $V$ that select the same vertex is minimized.  The number of neighbors that select a certain vertex $v \in V$ is called the {\em load} on $v$. The load of a backup placement solution for $G= (V,E)$ is the maximum load on  a vertex $v \in V$. For a positive integer $t \geq 1$, {\em  $t$-backup placement} for $G$ is a solution in which each vertex $v \in V$ selects a neighbor, such that the maximum load in $V$ is at most $t$ times the load of an optimal solution for $G$. 
To illustrate this problem, consider the following two simple examples. First, suppose that $G$ is a cycle graph. (See Figure \ref{fig1} in Appendix A). In this case, the optimal solution to the backup placement problem is when each node selects its succeeding neighbor to be its backup node (in blue in Figure \ref{fig1}). By that, the maximal load in this graph is of only one unit. Next, suppose that $G$ is a rooted tree as in Figure \ref{fig1} (right). The optimal solution in this case forces all the nodes to choose their parents (besides the root) to be their backup nodes. By that, the maximal load is $\Delta = \Delta(G)$, the maximum degree in the graph. Similarly, in any regular tree, no solution with load smaller than $\Delta - 1$ exists. On the other hand, some non-regular trees admit very good solutions, of load as small as $2$ units.


Finding a backup placement in a network while minimizing the load on the network vertices is a very important goal \cite{halldorsson2015bp,halldorsson2018distributed,oren2018distributed}. First of all, it allows each vertex to perform a backup to a neighboring node, rather than a more distant destination, and thus improves network performance. In addition, nodes' memories are used to the minimum extent for the purpose of backups, which makes it possible to maximize the memory available for other purposes of the vertices. Moreover, if a certain node fails, the number of backups that become unavailable is minimized. For these and other reasons, the backup placement problem is considered as a central distributed problem by Halldorsson, Kohler, Patt-Shamir and Rawitz \cite{halldorsson2015bp},  who initiated the study of this problem in the distributed setting in 2015.

We would like to point out an additional important aspect. This problem is also a central symmetry-breaking problem, and is interesting to compare to such important problems as Maximal Matching and Maximal Independent Set. In a certain sense the backup placement problem generalizes Maximal Matching. Note that a perfect matching is also a perfect backup placement. Indeed, all vertices can perform backups to their pairs in the matching. When a perfect matching is not possible, then an optimal backup placement assigns up to $c$ selecting neighbors to each vertex, for a parameter $c > 0$. Therefore, this generalizes Maximal Matching, in which $c = 1$, but certain vertices may be selected by no vertices at all. Since Maximal Matching and other symmetry-breaking problems cannot be solved in a constant number of rounds \cite{linial1992locality,kuhn2004cannot}, even on oriented trees and unit-disk graphs, the question of whether (approximate) backup placement can be solved in various networks within this number of rounds is of great interest.

\renewcommand{\thefootnote}{\arabic{footnote}}
In this paper we answer this question in the affirmative. Specifically, we show that $O(1)$-backup-placement can be computed in $O(1)$ rounds in  wireless networks, certain social networks, claw-free graphs, and more generally, in any graph with {\em neighborhood independence bounded by a constant.}\footnote[1]{ {\em Neighborhood independence} is the maximum number of independent neighbors a vertex in the graph has. 
}
At the other end, we consider sparse graphs, such as trees, forests, planar graphs and graphs of constant arboricity\footnote[2]{ {\em Arboricity} is the minimum number of forests that the graph edges can be partitioned into.}, and obtain a constant approximation to backup placement in $O(\log n)$ rounds. These results suggest that the problem is harder on sparse graphs than on dense ones. We prove this formally, by providing a lower bound for computing (approximate) backup placement on unoriented trees of $\Omega(\sqrt{\log n / \log \log n})$ rounds. Our lower bound holds for approximation ratios up to polylogarithmic in $n$. This is in contrast to oriented trees, where a constant approximation to backup placement is achieved within $O(1)$ rounds, and various dense graphs in which it is possible to compute a sparse infrastructure based on oriented trees within a constant number of rounds. All our algorithms work in the $\mathcal{CONGEST}$ model, where only $O(\log n)$ bits are sent per edge per round. Moreover, our algorithms are based on relatively simple internal computations, which makes them especially appropriate for implementation in networks with limited resources, such as sensor networks and Internet of Things. See Table 1 for a comparison of our results with various symmetry-breaking results in various graph families. 

Our results reveal interesting separations between complexities of several problems and several graph topologies, as one can see in Table 1. On the other hand, for certain topologies that are usually separated, our algorithms achieve the same performance of $O(1)$ rounds. We note that network topologies of bounded growth, bounded independence and bounded arboricity have been very intensively studied in the last decade. (See, e.g.,  \cite{barenboim2010sublogarithmic,barenboim2013distributed,barenboim2017deterministic,lenzen2008can,schneider2008log}, and the references therein.) 
For many problems, the fastest algorithms were achieved in graphs of bounded growth, while graphs of bounded arboricity and bounded neighborhood independence have somewhat slower solutions, though are usually much better than in the general case. 
For example, MIS requires $\Omega(\sqrt{\log n/ \log \log n})$ rounds in line graphs \cite{kuhn2004cannot}, which are of neighborhood independence bounded by $2$, but can be solved in $O(\log^* n)$ rounds in graphs of bounded growth \cite{schneider2008log}.
In this work, however, we achieve the same result of $O(1)$ rounds for $O(1)$-backup-placement, not only for bounded growth, but also in the more general case  of bounded neighborhood independence.

Our results significantly improve upon previously known backup placement results. The best previous results are those for general graphs, due to Halldorsson et al. \cite{halldorsson2018distributed}. These are randomized algorithms with polylogarithmic running times and approximation ratio $O(\frac{\log n}{\log \log n})$. In the current paper, however, all our algorithms are {\em deterministic}. Our running time is $O(1)$ for graphs with constant neighborhood independence (i.e., dense graphs) and $O(\log n)$ for graphs with constant arboricity (i.e., sparse graphs). Our approximation ratio is $O(1)$ for graphs with constant neighborhood independence, as well as  for graphs with constant arboricity. Thus we significantly outperform the previous results, both in terms of running time and approximation ratio, for various dense and sparse graph families we mentioned above.

\begin{table}[]
\centering
\footnotesize
\begin{tabular}{|M{2.2cm}|M{2cm}|M{3cm}|M{3.5cm}|M{3.5cm}|}
\hline
\backslashbox[26mm]{Problem}{Topology} & Very dense: Graphs of bounded growth & Dense: Graphs of constant neighborhood independence & Sparse: Graphs of constant arboricity & General Graphs \\

\hline

\vspace{3mm} Maximal Independent Set \vspace{-3mm} & $\Theta(\log^{*} n)$  & $2^{O(\sqrt {\log n})}$ & $O(\frac{\log n}{\log \log n})$ & $2^{O(\sqrt {\log n})}$ \cite{panconesi1996complexity} \\
& \cite{linial1992locality,schneider2008log} & \cite{panconesi1996complexity} & \cite{barenboim2010sublogarithmic} &  $\Omega(\sqrt {\frac{\log n}{\log \log n}})$ \cite{kuhn2004cannot}  \\

\hline

\vspace{3mm} Maximal Matching \vspace{-3mm} & $\Theta(\log^* n)$ & $O(\log n)$ & $O(\frac{\log n}{\log \log n})$ \cite{barenboim2010sublogarithmic}  & $O(\log^3 n)$ \cite{fischer2017improved}  \\ 
& \cite{linial1992locality,schneider2008log} & 
\cite{barenboim2018distributed} & $\Omega(\sqrt {\frac{\log n}{\log \log n}})$ \cite{kuhn2004cannot} & $\Omega(\sqrt {\frac{\log n}{\log \log n}})$ \cite{kuhn2004cannot} \\ 

\hline

$O(\frac{\log n}{\log \log n})$-Backup Placement  & \multicolumn{4}{c|}{$O(\frac{\log^ 6 n}{\log^4 \log n}$) \ \ \ (randomized)  \ \ \cite{halldorsson2015bp} } \\ 

\hline

\vspace{3mm} $O(1)$-Backup Placement \vspace{-3mm} & $O(1)$ & $O(1)$  & $O(\log n)$  &  \\
& \textbf{This paper}  & \textbf{This paper}  & $\Omega(\sqrt {\frac{\log n}{\log \log n}})$  \ \ \ \ \ \ \ \   \ \ \ \ \ \ \ \ \ \ \ \ \textbf{This paper} & \\ 

\hline

\end{tabular}%
\caption{Comparison of running times of state-of-the-art symmetry-breaking and backup placement deterministic results. (With the exception of $O(\log n / \log \log n)$-backup-placement, for which the only non-trivial previous result is randomized.) \vspace{-0.5cm}}
\label{my-label}
\end{table}

Our results also extend the knowledge with respect to the question of "what can be computed locally?". This fundamental question, asking what can be computed in $O(1)$ rounds, was raised in the seminal paper of Naor and Stockmeyer \cite{naor1995can}. The importance of algorithms with a constant number of rounds follows from the fact that they are not affected by network size, nor by other parameters, such as maximum degree. Consequently, such algorithms are extremely scalable. Since the publication of Naor and Stokmeyer, several distributed algorithms with a constant number of rounds were devised, including weak-coloring \cite{kuhn2009weak,naor1995can}, $O(\Delta)$-forests-decomposition \cite{panconesi2001some}, minimum coloring approximation \cite{barenboim2012locality,barenboim2018fast}, certain network decompositions \cite{barenboim2012locality,barenboim2018fast}, approximate minimum dominating set \cite{lenzen2008can,barenboim2018fast}, and approximate minimum spanner \cite{barenboim2018fast}. However, several fundamental symmetry-breaking problems cannot be solved in $O(1)$-rounds, including Maximal Matching and Maximal Independent Set. A lower bound of $\Omega(\log^* n)$ holds for these problems on paths, cycles, graphs of bounded growth, graphs of bounded independence and more \cite{linial1992locality}. Thus, we find it intriguing that a related problem of $O(1)$-backup-placement can be solved in these and other graph topologies in a constant number of rounds. 

Our deterministic constant-time algorithm has an additional helpful property. Specifically, all the instructions are based on simple local rules. The combination of these three properties, namely, determinism, locality and simplicity, makes this algorithm very suitable for self-stabilization in faulty settings. We obtain a self-stabilizing variant that stabilizes just within $3$ rounds. 

\hspace{0.01cm}
\subsection{Our Upper-Bound Techniques}
Previous techniques for distributed backup placement and related problems were based on selfish improvement policies, generalizations of Maximal Matching, and Maximal Packing \cite{halldorsson2018distributed,halldorsson2015distributed}. In contrast, in this paper we employ an entirely different approach. Specifically, our main technical tool is a set of  trees or forests that cover all vertices (but not necessarily all edges) of the input graph. We call such a forest, which contains all vertices of the input graph $G$, a {\em forest cover}. In most of our constructions we employ a small number of forest covers, and so the number of edges is in the order of the number of vertices. Interestingly, such forest covers turn out to be very useful not only for graphs of small arboricity, but also for dense graphs in which the arboricity is very large, and the number of edges in the input graphs can be as large as $\Theta(n^2)$. Thus our approach is more general than the popular approach of \cite{barenboim2010sublogarithmic} that places all edges in forests, and works especially well on low-arboricity graphs. The new structures we devise in the current paper are used to efficiently coordinate backups, and reduce redundant load. For example, each parent in a tree can select one of its children for a backup. Since each vertex has at most one parent, this causes a load of only $1$ unit. But leafs have no children, and thus constitute a challenging instance. Selecting their parents for backups can cause excessive load. Thus, our algorithms select siblings for backups, and coordinate them in such a way that no sibling is selected by too many vertices. 

Another challenging instance is bipartite graphs. Even if one side of the partition has a constant degree, but on the other side the degree is unbounded, it is challenging to avoid the selection of the same vertex by many neighbors. We overcome this difficulty by gradually selecting vertices of sufficiently low degrees, and temporarily removing them and their neighbors from the graph. This produces additional vertices of small degrees, which allows us to repeat the process, until all vertices succeed to make a selection for a backup. Since in each stage vertices select neighbors of sufficiently low degrees, the resulting load is bounded. Since each time a constant fraction of vertices is removed, the algorithm terminates within $O(\log n)$ rounds. This idea can be generalized to work for graphs with constant arboricity. These graphs can be partitioned into few forests. All non-leaf vertices can be easily handled, by selecting their children for backups. Since each vertex has a bounded number of parents, the  load is bounded as well. It remains to handle the leafs. But the set of leafs and the set of non-leafs constitute a bipartite graph on which one side has a bounded degree. This can be handled as explained above. 

\hspace{0.01cm}
\subsection{Our Lower-Bound Techniques}
We consider two trees that are identical, except the very last layer of leafs. While one of the trees is $d$-regular, for a parameter $d > 0$, in the other one each leaf is a single child, rather than having $d - 2$ siblings, as in the $d$-regular tree. We prove that an optimal backup placement of the former tree is such that the root is selected for backups by all its children. On the other hand, for the latter tree, in the optimal backup placement the children of the root must select their own children, rather than the root. Since the trees are indistinguishable  (by the roots) within fewer rounds than the height of the trees, any deterministic algorithm with such a number of rounds will fail at least on one of them.

The above argument works for an optimal backup placement, but not for an approximate one. For the latter case, we provide a more sophisticated proof. We consider a graph in which all neighborhoods of sufficiently low radius look like one of the above-mentioned trees, possibly with single-child leafs in the last layer. Consequently, a backup placement algorithm executed on such a graph must assign a small number of backups to each vertex. (Each vertex can be seen as a root of a certain tree, in which most of its children do not select it.) But this turns out to be impossible, unless the trees are sufficiently high and the algorithm is invoked for sufficiently many rounds. 

\hspace{0.01cm}
\subsection{The Importance of Backup Placement in Wireless Networks}
Internet of Things (IoT) devices involve very differentiated requirements in terms of communication fashion, computation speed, memory limit, transmission rate and data storage capacity \cite{aggarwal2013internet}. A massive volume of heterogeneous data is created by those diverse devices periodically, and needs to be sent to other locations. As those devices may operate in a nonstop fashion and placed in geographically-diverse locations, they introduce a challenge in terms of communication overhead and storage capacity, since transfer and storage of data are resource-consuming activities within an IoT data management \cite{abu2013data}. Hence, these factors are crucial in data management techniques for IoT.

The challenging situation can be described as follows. As data is aggregated in a device or in the concentration storage points within the IoT, the amount of vacant storage becomes very limited. Thus the data usually needs to be sent further up the system, either to a storage relay node, or storage end-node \cite{fan2010scheme}. As wireless broadband communications is most likely to be used all along this process in the IoT world, there are also strict limitations to the amount of communication each power-limited device can perform. Moreover, since crashes of nodes, failures of communication links, and missing data are unavoidable in wireless networks, fault-tolerance becomes a key-issue. Among the causes of these constant failures are environmental factors, damaged communications links, data collision and overloaded nodes \cite{kakamanshadi2015survey}. Thus, the problems of efficient data transfer, storage and backup become major topics of interest in IoT
\cite{qin2016things}. 
A specific topic of interest is backup placement in wireless sensor networks, which was considered in \cite{oren2018distributed}.

\section{Backup Placement in Trees and Forest Covers of Dense Graphs}\label{intro2}
We begin with devising a procedure for computing $O(1)$-backup placement in trees. We assume that each vertex knows its parent in the tree, and each have unique ID of logarithmic length. Later we will show how to drop this assumption. The procedure receives a tree $T = (V,E)$ as input, and proceeds as follows. Each vertex that is not a leaf selects an arbitrary child for the backup placement. Each leaf selects its parent for the backup placement. All these selections are performed in parallel within a single round. This completes the description of the algorithm. Its pseudocode is provided in Algorithm \ref{algo1}. Its action is illustrated in Figure \ref{fig2} in Appendix A. The next lemma summarizes its correctness.

\begin{algorithm}[H]
\caption{Tree Backup Placement}
\label{algo1}
\begin{algorithmic}[1]
\Procedure{Tree-BP(Tree $T$)}{}
\State {\bf foreach} node $v \in T$ in parallel do:
\If{$v$ is not a leaf}
	\State $v.BP \gets Arbitrary(v.children)$.
\Else
	\State $v.BP \gets v.parent$
\EndIf
\EndProcedure
\end{algorithmic}
\end{algorithm}

\begin{lemma}
The algorithm computes an $O(1)$-backup placement of $T$.
\end{lemma}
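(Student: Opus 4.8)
The plan is to bound the algorithm's load from above and the optimal load from below, and show the two are within a constant factor. The crucial observation --- and the step I would identify first --- is that a leaf has degree exactly $1$ in $T$, so its only neighbor is its parent. Hence in \emph{every} backup placement, including an optimal one, each leaf is \emph{forced} to select its parent; there is no choice to be made. Consequently, if $L$ denotes the maximum over all vertices $v$ of the number of leaf-children of $v$, then the optimal load satisfies $\mathrm{OPT} \ge L$, since the $L$ leaf-children of the maximizing vertex must all select it in any solution. This lower bound holds for any tree with at least one edge, for which $L \ge 1$.

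Next I would upper-bound the load produced by Algorithm~\ref{algo1} on an arbitrary vertex $v$ by enumerating which vertices can select $v$. A vertex $u$ selects $v$ only under one of two rules. Either $u$ is a non-leaf that picked $v$ as its arbitrary child, in which case $u$ must be the parent of $v$; since $v$ has at most one parent, this contributes at most $1$ to the load on $v$. Or $u$ is a leaf that selected its parent $v$, and the number of such $u$ is exactly the number of leaf-children of $v$, which is at most $L$. (A non-leaf child of $v$ selects one of its own children, never $v$, and a leaf has no children so it is never selected from below.) Adding these two contributions, the load on every vertex is at most $L + 1$.

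Combining the two bounds, the algorithm's load is at most $L + 1 \le 2L \le 2\,\mathrm{OPT}$ for any tree with at least one edge, so the computed placement is a $2$-approximation, which is $O(1)$ as claimed. The root and the leaves are easy boundary cases consistent with this estimate: the root has no parent and so receives load at most $L$, while a leaf has no children and receives load at most $1$.

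I do not expect a genuine obstacle here; the only subtlety is making the lower bound tight enough. A naive argument that merely says ``each leaf picks some neighbor'' would not suffice, because one must use that the \emph{only} neighbor of a leaf is its parent in order to force the selection and thereby pin $\mathrm{OPT}$ from below by the very same quantity $L$ that controls the algorithm's load from above. Recognizing that this single quantity simultaneously drives both bounds is what makes the constant factor fall out immediately.
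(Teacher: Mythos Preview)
Your proof is correct and follows essentially the same approach as the paper: both use that every leaf is forced to select its parent, so the number of leaf-children of $v$ lower-bounds the optimal load at $v$, while the algorithm's load at $v$ is at most that same count plus one (for the possible selection by $v$'s parent). The only cosmetic difference is that the paper states the per-vertex inequality $f'(v)\le f(v)+1$ directly, whereas you pass through the global quantity $L$; the resulting bound (load $\le \mathrm{OPT}+1 \le 2\,\mathrm{OPT}$) is the same.
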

\begin{proof}
Let $S$ be an optimal solution, and $S'$ is our solution for $T$. Each leaf must have selected its parent because this is its only neighbor. Let $f': V \rightarrow N$ denote the number of backups in each vertex $v \in V$ in our solution.  Let $f: V \rightarrow N$ denote the number of backups in each vertex $v \in V$ in the optimal solution. Next, we show that $f'(v) \leq f(v) + 1$, for each $v \in V$. It holds that $f(v)$ is at least the number of leaf children of $v$. On the other hand, $f'(v)$ is the number of  leaf children of $v$ plus at most $1$, since $v$'s non-leaf children do not select $v$, but $v$'s parent may select $v$. 
\end{proof}
The above algorithm can be directly extended to a forest of trees in which vertices know their parents, by executing the algorithm in parallel on all trees. More interestingly, we can extend this technique to graphs that are very dense, as opposed to trees and forests. To this end, we construct a forest cover of the input graph. (I.e., a subset of edges without cycles, such that each vertex in the input graph belongs to at least one edge in the subset.) The main idea of our algorithm is that each vertex whose ID is not a local maximum, selects an edge that connects with a neighbor of higher ID. Local maximum vertices select arbitrary neighbors. The detailed description of this algorithm and its analysis are relegated to Appendix \ref{appendix:forest}. We summarize this in Corollary \ref{cor:forestcover}. 

\begin{corollary} \label{cor:forestcover}
A forest cover of any input graph $G$ can be constructed within $O(1)$ rounds.
\end{corollary}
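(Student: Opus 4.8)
The plan is to make the construction precise, verify the covering property (which is immediate), and then spend the bulk of the argument on acyclicity. I would run the following algorithm: every vertex first learns the $\mathrm{ID}$s of all its neighbors in one $\mathcal{CONGEST}$ round (each $\mathrm{ID}$ has $O(\log n)$ bits). A vertex $v$ that is \emph{not} a local maximum --- i.e.\ that has a neighbor of larger $\mathrm{ID}$ --- selects the edge to its \emph{maximum}-$\mathrm{ID}$ neighbor; a local maximum selects an edge to an arbitrary neighbor (necessarily of smaller $\mathrm{ID}$). The forest cover $E'$ is the set of edges selected by at least one endpoint. Since every vertex selects an incident edge, every vertex is covered, so it remains only to prove that $E'$ is acyclic.

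To analyze acyclicity I would orient each selected edge from the selecting vertex toward the selected neighbor, and split $E'$ into \emph{up-edges} (oriented toward the larger $\mathrm{ID}$, produced by non-maximal vertices) and \emph{down-edges} (oriented toward the smaller $\mathrm{ID}$, produced by the arbitrary choice of a local maximum). First I would show that the up-edges alone form a forest: each non-maximal vertex has exactly one outgoing up-edge and it strictly increases the $\mathrm{ID}$, so following these pointers gives a strictly increasing sequence of $\mathrm{ID}$s that can never close a cycle. The up-edges thus form a rooted forest whose roots are exactly the local maxima; write $T_v$ for the tree rooted at a local maximum $v$.

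The main obstacle is the down-edges: a local maximum picks an \emph{arbitrary} neighbor, and a careless choice can close a cycle together with up-edges (already on $C_4$, selecting an arbitrary higher neighbor instead of the maximum one fails), so the maximum-neighbor rule is precisely what controls this. I would argue that a down-edge from a local maximum $v$ to a neighbor $u$ is harmless. If $v$ happens to be $u$'s maximum neighbor, then $\{u,v\}$ is already an up-edge and nothing new is added. Otherwise $u$'s maximum neighbor $p$ satisfies $\mathrm{ID}(p) > \mathrm{ID}(v)$, so $u$ lies in a tree $T_r$ with $\mathrm{ID}(r) \ge \mathrm{ID}(p) > \mathrm{ID}(v)$, and in particular $r \neq v$. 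Hence, after contracting each tree $T_v$ to a single node, every genuinely new down-edge runs from a local maximum $v$ to a strictly larger local maximum $r$. Each local maximum contributes at most one such contracted edge, so in the contracted graph every node has out-degree at most one and every edge strictly increases the $\mathrm{ID}$; an undirected cycle there would force a directed cycle along strictly increasing $\mathrm{ID}$s, which is impossible. Thus the contracted graph is acyclic, and since each $T_v$ is itself a tree, $E'$ is a forest.

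Finally I would record the round complexity: learning neighbors' $\mathrm{ID}$s and (if desired) informing each neighbor whether it was selected each take a single round, so the construction runs in $O(1)$ rounds of $\mathcal{CONGEST}$. The delicate step --- the one I would write out most carefully --- is the down-edge analysis, namely verifying that the maximum-neighbor rule forces every nontrivial down-edge to point into a higher tree and hence cannot create a cycle; the covering property and the up-edge forest structure are routine.
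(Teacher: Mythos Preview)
Your argument is correct but takes a different route from the paper's appendix proof. There, every non-local-maximum picks an \emph{arbitrary} higher-$\mathrm{ID}$ neighbor, and only those local maxima that are \emph{not already covered} by some child's selection then pick an arbitrary neighbor; acyclicity is immediate because such an uncovered local maximum currently has degree $0$ in $E'$, so its new edge attaches it as a leaf, which cannot lie on any cycle. You instead insist on the \emph{maximum}-$\mathrm{ID}$ neighbor and let every local maximum select unconditionally. As you correctly observe, the maximum rule is essential for your variant --- arbitrary higher neighbors combined with unconditional local-max selections can already produce a cycle on $C_4$. Your acyclicity proof is accordingly more involved (the tree-contraction argument, using that a genuinely new down-edge always lands in a tree with strictly larger root $\mathrm{ID}$), but it buys a clean one-shot protocol with no ``check whether I am covered'' phase. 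One small point worth making explicit in the write-up: the final implication, that a forest on the contracted nodes together with the individual trees $T_v$ yields a forest on $V$, is standard but is currently left implicit; a one-line edge count ($|E'| = (n-k)+|E(H)| = n - c$, where $k$ is the number of trees and $c$ the number of components of the contracted forest $H$) or the observation that a cycle in $E'$ would project to a nontrivial closed walk in the simple forest $H$ would close it cleanly.
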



Next, we devise an algorithm that computes a backup placement for any graph $G$ in which a forest cover has been computed. For simplicity, we describe our algorithm for a tree $T$ in the forest. If there are more trees, the algorithm should be executed on all trees of the forest in parallel. (However, each vertex is aware only of the execution in the tree $T$ it belongs to.) 
We compute backup placement for $T$ as follows. Each non-leaf vertex in $T$ selects one of its children arbitrarily. Each leaf $v \in T$ selects a sibling $w$ of $v$ in $T$ with the property that $(v,w) \in E(G), ID(w) < ID(v)$, and there is no other sibling $z$ of $v$ such that $(v,z) \in E(G)$, and $ID(w) < ID(z) < ID(v)$. If there is no such sibling $w$, then $v$ selects its parent. This completes the description of the algorithm. Note that if there is such a sibling $w$, it is unique. The pseudocode is provided in Algorithm \ref{algo2}. Its action is illustrated in Figure \ref{fig3} in Appendix A. We analyze the algorithm below.

\begin{algorithm}[H]
\caption{General Backup Placement}
\label{algo2}
\begin{algorithmic}[1]
\Procedure{General-BP(Graph $G = (V,E)$, Tree $T$)}{}
\State {\bf foreach} node $v \in T$ in parallel do:
\If{$v$ is not a leaf}
	\State $v.BP \gets Arbitrary(v.children)$.
\ElsIf{$\exists w$ sibling of $v$, such that $(v,w) \in E$ and $(ID(w) < ID(v))$ $\land$ $\neg \exists z$ sibling of $v$, such that $(v,z) \in E$ and $(ID(w) < ID(z) < ID(v))$}
	\State $v.BP \gets w$.
\Else
	\State $v.BP \gets v.Parent$.
\EndIf

\EndProcedure
\end{algorithmic}
\end{algorithm}

First, we prove that for graphs with bounded neighborhood independence, the algorithm provides a good backup placement.

\begin{lemma}
Let $G$ be a graph with neighborhood independence at most $c$, for  $c > 0$, and $T$ be a tree in a forest cover of $G$. Then each vertex of $G$ is selected by at most $c$ of its \textbf{children} in $T$ in Algorithm \ref{algo2}. 
\end{lemma}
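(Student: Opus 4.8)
The plan is to characterize exactly which children of a vertex $u$ in $T$ can select $u$ under Algorithm \ref{algo2}, and then to show that this set of children forms an independent set inside the neighborhood of $u$, so that the neighborhood independence bound applies directly.

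First I would observe that a non-leaf child of $u$ never selects $u$: by the first branch of Algorithm \ref{algo2}, a non-leaf vertex selects one of its own children. Hence only leaf children of $u$ can select $u$. Among the leaf children, a child $v$ selects its parent $u$ (the final \textbf{else} branch) precisely when there is no sibling $w$ of $v$ with $(v,w) \in E(G)$ and $ID(w) < ID(v)$; otherwise $v$ selects such a sibling in the \textbf{ElsIf} branch. So the children of $u$ that select $u$ are exactly the leaf children $v$ that have no $G$-neighbor of smaller ID among their siblings.

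Next I would show that any two such children are non-adjacent in $G$. Suppose $v_1$ and $v_2$ are both leaf children of $u$ that select $u$, and suppose for contradiction that $(v_1,v_2) \in E(G)$. Without loss of generality $ID(v_1) < ID(v_2)$. Then $v_1$ is a sibling of $v_2$ with $(v_1,v_2) \in E(G)$ and $ID(v_1) < ID(v_2)$, so $v_2$ has a qualifying smaller-ID sibling-neighbor and would enter the \textbf{ElsIf} branch rather than select $u$, a contradiction. Hence the set of children of $u$ that select $u$ is an independent set in $G$.

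Finally, since $T$ is a subgraph of $G$ (it is a tree in a forest cover of $G$), every child of $u$ in $T$ is joined to $u$ by an edge of $G$, so all of these children are neighbors of $u$ in $G$. They therefore form an independent set contained in the neighborhood of $u$, and by the hypothesis that $G$ has neighborhood independence at most $c$, there can be at most $c$ of them. I expect the only delicate point to be the independence argument in the third step: one must invoke the ID-based tie-breaking to rule out adjacency between two distinct selectors, after which the neighborhood independence assumption closes the argument immediately.
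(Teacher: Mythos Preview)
Your proof is correct and follows essentially the same approach as the paper: both argue that any two children of $u$ selecting $u$ must be non-adjacent in $G$ (since otherwise the one with larger ID would have entered the \textbf{ElsIf} branch and chosen a sibling), so the selecting children form an independent set in $u$'s $G$-neighborhood and hence number at most $c$. Your version is slightly more explicit in first ruling out non-leaf children and in noting that children in $T$ are $G$-neighbors of $u$, but the core argument is identical to the paper's contradiction-based proof.
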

\begin{proof}
Assume for contradiction that $v \in V$ of $G$ has been selected by $c+1$ of its children in $T$. Let $V_{c+1}$ denote the set of those children. Since $G$ is a graph with neighborhood independence at most $c$, it implies that there must be at least two vertices $u, w \in V_{c+1}$ which are connected by an edge in G. Assume without loss of generality that $ID(u) < ID(w)$. In this case, according to our algorithm, $w$ must have selected either $u$ or some other sibling, but not its parent $v$. This is a contradiction.  
\end{proof}

\begin{lemma}
Let $G$ be a graph with neighborhood independence at most $c$, for $c > 0$, and $T$
be a tree in a forest cover of $G$. Then each vertex of $G$ is selected by at most $c$ of its \textbf{siblings} in $T$ in Algorithm \ref{algo2}.
\end{lemma}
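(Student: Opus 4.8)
The plan is to mirror the argument of the preceding lemma, exploiting the fact that the sibling-selection rule in Algorithm \ref{algo2} is an ID-proximity rule rather than an arbitrary choice. First I would record which vertices can possibly select a sibling: only leaves execute the \texttt{ElsIf} branch, and when a leaf $u$ selects a sibling $w$, the rule forces $(u,w)\in E(G)$, $ID(w)<ID(u)$, and crucially that $w$ is the \emph{closest} such sibling from below, i.e.\ there is no sibling $z$ of $u$ with $(u,z)\in E(G)$ and $ID(w)<ID(z)<ID(u)$.

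Next I would assume for contradiction that some vertex $w$ is selected by at least $c+1$ of its siblings, and collect these into a set $U$ with $|U|\ge c+1$. Every $u\in U$ is adjacent to $w$ in $G$, so $U$ is a set of $c+1$ neighbors of $w$. Since the neighborhood independence of $G$ is at most $c$, the set $U$ cannot be independent, so there exist $u_1,u_2\in U$ with $(u_1,u_2)\in E(G)$; assume without loss of generality $ID(u_1)<ID(u_2)$.

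Then I would derive the contradiction. All members of $U$ are siblings of $w$, hence share $w$'s parent, hence are siblings of one another; in particular $u_1$ is a sibling of $u_2$. Moreover every $u\in U$ selected a sibling with smaller ID, so $ID(w)<ID(u_1)<ID(u_2)$. Thus $u_1$ is a sibling of $u_2$, is adjacent to $u_2$ in $G$, and satisfies $ID(w)<ID(u_1)<ID(u_2)$. This is exactly the forbidden intermediate vertex $z$ that the selection rule rules out for $u_2$'s choice of $w$: $u_2$ could not have selected $w$, contradicting $u_2\in U$. Hence $|U|\le c$.

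The only delicate point, and the step I would check most carefully, is that the adjacent pair furnished by the neighborhood-independence bound really triggers the rule's exclusion clause. This hinges on two facts that must be stated explicitly: that being a sibling of $w$ is equivalent to sharing $w$'s parent (so the members of $U$ are pairwise siblings, letting $u_1$ play the role of $z$ for $u_2$), and that every selecting sibling has ID strictly larger than $ID(w)$ (so the adjacent $u_1$ lands strictly inside the open interval $(ID(w),ID(u_2))$ rather than below $ID(w)$). Once these are in place the contradiction is immediate, so I do not expect a genuine obstacle beyond making the sibling/ID bookkeeping airtight.
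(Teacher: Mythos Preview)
Your proposal is correct and follows essentially the same argument as the paper's proof: assume more than $c$ siblings select the target, use the neighborhood-independence bound to find an adjacent pair among the selectors, and then observe that the lower-ID member of the pair blocks the higher-ID member from having chosen the target under the closest-ID rule. Your version is slightly more explicit about the bookkeeping (that siblings of $w$ are pairwise siblings, and that every selector has ID above $ID(w)$), but the structure and key idea are identical.
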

\begin{proof}
If more than $c$ siblings select $v$, two of them are connected by an edge in $G$, since neighborhood independence is bounded by $c$. Denote these vertices by $u,w$. (Note that $(u,v) \in E(G), (u,w) \in E(G), (v,w) \in E(G)$). Assume without loss of generality that $ID(u) < ID(w)$. Since $u$ has selected $v$, it means that $ID(v) < ID(u)$. Therefore, $ID(v) < ID(u) < ID(w)$, and $w$ could not have selected $v$, since it should have selected a sibling with the closest $ID$ to its own. This is a contradiction.
\end{proof}


Since each vertex can be selected only by its children, by its siblings that are connected to it in $G$, and by its parent, and cannot be selected by vertices outside $T$, we obtain the next corollary. 

\begin{corollary}
Let $G$ be a graph with neighborhood independence at most $c > 0$, and $T$
 a tree in a for- est cover of $G$. Then each vertex of $T$ is selected by at most $2c + 1$ of its neighbors in $G$ in Algorithm \ref{algo2}.
\end{corollary}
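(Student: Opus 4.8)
The plan is to derive this bound as a direct consequence of the two preceding lemmas, by first pinning down exactly which vertices are able to select a given vertex $v$ under Algorithm \ref{algo2}, and then summing the disjoint contributions.

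First I would inspect the selection rule of Algorithm \ref{algo2} from the point of view of a fixed vertex $v$, asking which vertices $u$ can possibly set $u.BP = v$. A non-leaf vertex selects one of its own children, so such a $u$ contributes to $v$'s load only when $v$ is a child of $u$, i.e. when $u$ is the parent of $v$ in $T$. A leaf vertex selects either a sibling (via the closest-lower-ID rule) or, failing that, its parent; hence a leaf $u$ selects $v$ only when $v$ is a sibling of $u$ in $T$, or when $v$ is the parent of $u$ in $T$ (so that $u$ is a leaf child of $v$). Crucially, every selection made by the algorithm is internal to the tree $T$: a leaf picks a sibling \emph{in} $T$ (even though the witnessing edge lives in $E(G)$) or its parent \emph{in} $T$, and a non-leaf picks a child \emph{in} $T$. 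Since a forest cover partitions $V$ into trees and the procedure is run independently on each, no vertex lying outside $T$ can select $v$. This establishes that the only possible selectors of $v$ fall into three classes: children of $v$, siblings of $v$ in $T$, and the parent of $v$.

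Next I would bound each class. By the lemma bounding selection by children, at most $c$ children of $v$ select it; by the lemma bounding selection by siblings, at most $c$ siblings of $v$ select it; and since $v$ has a unique parent in the tree $T$, at most $1$ parent selects it. In a tree the sets of children, siblings, and the parent of $v$ are pairwise disjoint (they sit at the level below, the same level, and the level above $v$, respectively), so the three bounds simply add, yielding a total load of at most $c + c + 1 = 2c + 1$, which is the claim.

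The only real care needed is in the first step: verifying that the enumeration of potential selectors is genuinely exhaustive and that the three classes are disjoint. In particular I want to be sure that the sibling-selection rule, which reads off an edge of $E(G)$ rather than of $T$, does not allow a vertex from a different tree of the cover to select $v$; this is guaranteed because the sibling $w$ in the rule is required to be a sibling \emph{in} $T$, so both endpoints of every realized selection lie in the same tree. Once exhaustiveness and disjointness are confirmed, no further computation is required and the corollary is immediate from the two lemmas.
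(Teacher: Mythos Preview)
Your argument is correct and matches the paper's own reasoning: the paper states just before the corollary that each vertex can be selected only by its children, by its siblings connected to it in $G$, and by its parent, and not by vertices outside $T$, then combines the two preceding lemmas with the single-parent observation to get $2c+1$. Your write-up is in fact a bit more explicit about exhaustiveness and disjointness than the paper, but the approach is the same.
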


The next lemma  analyzes the running time of the algorithm. The proof is relegated to Appendix \ref{o1}.

\begin{lemma}
The running time of the algorithm is $O(1)$.
\end{lemma}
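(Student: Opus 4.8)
The plan is to analyze Algorithm \ref{algo2} and argue that it terminates in a constant number of communication rounds in the $\mathcal{CONGEST}$ model. First I would observe that the algorithm is purely local: each vertex $v$ makes its backup selection based only on information about itself, its parent, its children, and its siblings in the tree $T$, together with comparisons of IDs among these vertices. None of these decisions require iteration or waiting for a global condition to be resolved, so the entire selection is made in a single ``pass'' once the relevant local data is available.

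The key steps, in order, would be as follows. I would first account for the preprocessing: by Corollary \ref{cor:forestcover}, the forest cover itself is constructed within $O(1)$ rounds, so we may assume each vertex already knows its parent and children in $T$. Next I would identify precisely what nonlocal information each vertex needs in order to execute its rule. A non-leaf vertex simply picks an arbitrary child, which requires no communication beyond knowing its children. A leaf $v$, however, must determine the sibling $w$ satisfying $(v,w) \in E(G)$ with the largest ID strictly below $ID(v)$ (or conclude none exists and select its parent). The crucial point is that $v$'s siblings are exactly the children of $v$'s parent, so the IDs of all candidate siblings can be gathered by a constant number of message exchanges along tree edges: each child reports its ID to the common parent, the parent aggregates and disseminates this information, and then each leaf performs the comparison locally. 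I would verify that this aggregation and dissemination fits the $O(\log n)$-bit-per-edge budget of the $\mathcal{CONGEST}$ model, noting that IDs have logarithmic length and that the needed information per leaf is a single sibling ID.

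The main obstacle, and the point I would treat most carefully, is precisely this $\mathcal{CONGEST}$ congestion issue: a parent may have unboundedly many children, so it cannot forward every child's ID to every other child within $O(1)$ rounds if that required sending $\Theta(\deg)$ distinct messages over a single edge. I would resolve this by observing that each leaf does not need the IDs of all its siblings, but only the identity of its unique selected sibling $w$, and that a parent can compute, for each of its children, the relevant predecessor ID among the children that are actually adjacent in $G$. The cleanest argument is that the relevant adjacency information $(v,w) \in E(G)$ is known to $v$ directly (these are $v$'s own neighbors in $G$), so $v$ can identify $w$ among its $G$-neighbors once it learns which of them are its siblings in $T$; this membership test again reduces to a constant-round exchange through the common parent. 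I would then conclude that, since every vertex's rule is resolved after this fixed constant number of exchanges and a local computation, the total running time is $O(1)$, completing the proof.
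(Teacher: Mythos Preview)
Your overall conclusion is correct, and you rightly identify that the algorithm is purely local. The paper's own proof of this lemma is far terser than yours: it simply observes that the algorithm has two parts (non-leaves select a child, leaves select a neighbor), each requiring a constant number of rounds, and stops there. The $\mathcal{CONGEST}$ message-size question is deferred to a separate lemma (Lemma~2.5).

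Since you do engage with the $\mathcal{CONGEST}$ issue, it is worth flagging a gap in your proposed resolution. You correctly diagnose the obstacle: a parent with many children cannot forward all sibling IDs to each child under an $O(\log n)$-bit budget. But your fix --- having the membership test go ``through the common parent,'' or having the parent compute each child's relevant predecessor --- runs into the same bottleneck. For the parent to compute the predecessor of $v$ among siblings adjacent to $v$ in $G$, the parent would need to know $v$'s $G$-neighborhood (or the $G$-adjacencies among its children), and transmitting that information across a single edge again exceeds the per-round bandwidth. The paper's actual mechanism (in the proof of Lemma~2.5) avoids the parent entirely: every vertex sends its own ID \emph{and its parent's ID} directly to all of its $G$-neighbors. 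A leaf $v$ then locally recognizes which of its $G$-neighbors share its parent, and picks the one with the largest ID below its own. This is a single round of $O(\log n)$-bit messages along $G$-edges, with no aggregation at the parent.
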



By starting with a computation of a forest cover of the input graph $G$, and then invoking our algorithm on all trees of the forest cover of $G$ in parallel, we obtain the following result.

\begin{corollary}
For an input graph $G$ with neighborhood independence $c$, we compute backup placement with load at most $2c + 1$ in $O(1)$ rounds.
\end{corollary}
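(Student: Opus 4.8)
The plan is to assemble the corollary by composing the three ingredients already established in this section: the $O(1)$-round construction of a forest cover (Corollary \ref{cor:forestcover}), the per-tree load bound of the immediately preceding corollary, and the $O(1)$ running-time lemma for Algorithm \ref{algo2}. This is a ``glue'' argument rather than a new construction, so the work is entirely in verifying that the pieces fit together without losing the constant bounds.

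First I would invoke Corollary \ref{cor:forestcover} to build, in $O(1)$ rounds, a forest cover $F$ of $G$, that is, an acyclic subset of edges touching every vertex. Since $F$ is a forest, its connected components are trees $T_1, T_2, \dots$, and every vertex of $G$ lies in exactly one such tree. I would then run Algorithm \ref{algo2} on all of these trees simultaneously (each vertex participating only in the execution on the tree that contains it), which by the running-time lemma costs a further $O(1)$ rounds. Summing the two phases gives a total of $O(1)$ rounds.

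The load analysis is the step that needs the most care, since the guarantees proved so far are \emph{per tree} and must be upgraded to a global bound. The key observation is that every selection made by Algorithm \ref{algo2} stays inside the tree of the selecting vertex: a non-leaf selects one of its children in its own tree, and a leaf selects either a sibling (which shares its parent, hence lies in the same tree) or its parent. Consequently, the only vertices that can select a given $v$ are its children, its siblings in $T$ that are adjacent to it in $G$, and its parent, all lying within the single tree $T$ to which $v$ belongs. The preceding corollary caps this count at $2c+1$, and because no selector can originate from another component, the total load on $v$ is at most $2c+1$. Combining this with the $O(1)$ round complexity yields the claim.

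The main obstacle to stay alert to is exactly this composition across trees: one must check that selections never cross component boundaries and that each vertex acts as a selector in only one tree, so that the per-tree bounds cannot stack on a single vertex. Once that is confirmed, the remainder is a direct application of the previously stated results, with no further calculation required.
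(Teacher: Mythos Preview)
Your proposal is correct and follows essentially the same approach as the paper: compute a forest cover in $O(1)$ rounds via Corollary \ref{cor:forestcover}, then run Algorithm \ref{algo2} in parallel on all trees, and invoke the preceding per-tree load bound together with the $O(1)$ running-time lemma. The paper's own justification is just the one-line remark preceding the corollary, while you spell out explicitly why the per-tree bounds do not stack (selections never cross tree boundaries), which is a helpful clarification but not a different argument.
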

Next, we analyze the message complexity of our algorithm. The proof is in Appendix \ref{logn}.
\begin{lemma}
The algorithm can be implemented with messages of size $O(\log n)$ per link per round.
\end{lemma}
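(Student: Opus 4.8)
The plan is to go through Algorithm \ref{algo2} line by line, identify exactly what information each vertex must receive from its neighbors in order to make its decision, and then show that all of it fits into a constant number of messages, each carrying only $O(\log n)$ bits. The only genuinely nontrivial case is the branch in which a leaf $v$ searches for a sibling $w$ satisfying $(v,w)\in E(G)$, $ID(w)<ID(v)$, and having no intervening sibling $z$ with $ID(w)<ID(z)<ID(v)$; every other branch (selecting an arbitrary child, or selecting the parent) is purely local once $v$ knows its position in $T$.

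First I would observe that to execute the leaf branch, $v$ needs only two things about each of its neighbors $u$ in $G$: the identifier $ID(u)$, and whether $u$ is a sibling of $v$ in $T$, i.e.\ whether $u$ and $v$ share the same parent. Both are determined by the pair $(ID(u), ID(\mathrm{parent}(u)))$. Hence a single synchronous round in which every vertex transmits the $O(\log n)$-bit pair consisting of its own identifier and its parent's identifier along each incident edge of $G$ suffices: afterwards $v$ knows, for every $G$-neighbor $u$, both $ID(u)$ and $ID(\mathrm{parent}(u))$, and can therefore restrict attention to those neighbors that are siblings and compute locally the unique sibling carrying the largest identifier still below $ID(v)$. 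The point is that $v$ never has to learn its entire sibling set; the relevant candidates are exactly the other endpoints of the $G$-edges incident to $v$, so their identifiers arrive directly on those edges and no routing through the parent is required.

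Next I would account for the remaining communication. A non-leaf vertex already knows its children from the tree structure, so selecting an arbitrary child needs no messages; if one wants the chosen backup to be informed of its selection, a single additional round in which each vertex sends the identifier of its selected neighbor (again $O(\log n)$ bits) along the corresponding edge completes the bookkeeping. The preliminary forest-cover construction of Corollary \ref{cor:forestcover} is handled the same way: each vertex only compares its own identifier against those of its neighbors to decide whether it is a local maximum and which incident edge to retain, which again requires exchanging a single $O(\log n)$-bit identifier per edge. Summing over the constant number of rounds established by the running-time analysis, every link carries only a constant number of identifiers per round, each of length $O(\log n)$, yielding the claimed bound.

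I expect the main obstacle to be precisely the observation that identifying the correct sibling does not require transmitting variable-length neighbor lists or aggregating information at the parent, either of which would inflate the per-link load beyond $O(\log n)$. The resolution is that piggybacking each vertex's parent identifier onto its broadcast lets every vertex classify its $G$-neighbors as siblings or non-siblings from locally received constant-size data, so the search in the leaf branch reduces to a purely local computation over $O(\log n)$-bit quantities.
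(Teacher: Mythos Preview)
Your proposal is correct and matches the paper's own proof essentially line for line: the key observation in both is that having every vertex broadcast the pair $(ID(v), ID(\mathrm{parent}(v)))$ over each incident edge lets neighbors classify one another as siblings and then locally select the sibling with the closest smaller $ID$, all within $O(\log n)$ bits per link per round. Your write-up is somewhat more detailed than the paper's, but the argument and the mechanism are identical.
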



We note that for a variety of real-life network topologies, the neighborhood independence $c$ is bounded by a small constant. This includes Unit Disk Graphs that model wireless networks with the same transmission range for all nodes, and Bounded Disk Graphs in which transmission ranges may differ. We elaborate on this in Appendix \ref{appendix:udg-bdg}.
%
Our algorithm is applicable not only to wireless networks in which the number of independent nodes is bounded in each $r$-hop neighborhood, for a constant $r$ (as UDG and BDG), but also to more general networks. Specifically, it is sufficient that the number of neighbors is bounded only in the $1$-hop neighborhood. A notable example is line graphs, in which the number of independent nodes in $1$-hop neighborhoods is bounded by $2$, but it is unbounded in $r$-hop neighborhoods for $r \geq 2$. The proof of this is provided in Appendix \ref{appendix:udg-bdg}, Lemma \ref{lem:indln}.
Line graphs are an example of a family with {\em constant diversity} \cite{barenboim2017deterministic}. (Their diversity is bounded by $2$.) Graphs with diversity $c$ are those in which all vertices belong to at most $c$ cliques.  Hence, each vertex has at most $c$ independent neighbors. Thus, our algorithms are applicable also to this more general family of graphs with constant diversity. Moreover, we obtain a self-stabilizing variant of backup placement in graphs of bounded neighborhood independence that stabilizes just within $3$ rounds. We elaborate on that in Appendix \ref{selfstab}.
\vspace{-0.3cm}

\hspace{0.01cm}
\section{Backup Placement in Bipartite Graphs}
\label{BackupPlacementinBipartiteGraphs}

In this section we devise a backup placement algorithm for bipartite graphs $G = (U,V,E)$, in which the maximum degree of vertices in $U$ is bounded by a parameter $a$, and the maximum degree of vertices in $V$ is unbounded. Such graphs are motivated by client-server systems in which each client may connect to a bounded number of servers, but each server has an unrestricted number of clients. For now, we assume that all vertices know the load value $t$ of the optimum backup placement, i.e., the maximum number of selected vertices in an optimum solution. Vertices also know $a$ and $n$. Later, we relax these assumptions. The goal of our algorithm is obtaining a maximum load of $O(at)$. To this end, each vertex of $V$ may select for a backup an arbitrary neighbor in $U$. Since each vertex in $U$ has at most $a$ neighbors, the maximum load on vertices of $U$ is going to be at most $a$ as well. It remains to define an algorithm for the vertices of $U$. The algorithm is the following. It proceeds in phases. In each phase, each vertex in $U$ that has a neighbor in $V$ with degree at most $2at$, selects such a vertex. (The selection is arbitrary, if more than one such a neighbor exists.) Next, we remove from $V$ all vertices of degree at most $2at$, and also remove from $U$ all neighbors of vertices that have been removed from $V$. (Note that all these vertices of $U$ have already made selections in $V$. Note also that once some vertices are removed, certain vertex degrees become smaller.) Next, we proceed to the next phase, that is performed in the same way. Specifically, vertices in $U$ with neighbors of degree at most $2at$ select such neighbors, one neighbor each. Then, all vertices in $V$ of degree bounded by $2at$, and their neighbors, are removed. We repeat this until no vertices remain in $G$.
The pseudocode of the algorithm is provided in Algorithm \ref{algo3}. Its action is illustrated in Figure \ref{fig4} in Appendix A. Next we prove its correctness and analyze running time.

\begin{algorithm}[H]
\caption{The Bipartite Graph Distributed Backup Placement Algorithm}
\label{algo3}
\begin{algorithmic}[1]
\State for each $v \in V$, $v.allow\_backups = True$.
\Procedure{Bipartite-BP($G = (U,V,E), a, t$)}{}
\State In each round each vertex $v \in V$ in parallel does:
\If {$deg(v) < 2at$ and $deg(v) \neq 0$}
\State $v$ sends a message "allow backup" to all of its neighbors. 
\State Each vertex in $U$ that receives an "allow backup" message, selects for a backup an arbitrary neighbor that sent this message.
	\State $v.allow\_backups \gets False$
    \State $V = V \setminus v$, $U = U \setminus v.neighbors$
\EndIf
\EndProcedure
\end{algorithmic}

\end{algorithm}

\begin{lemma}
\label{lemmat}
In each phase, at most $1/2$ of remaining vertices in $V$ have more than $2at$ remaining neighbors each.
\end{lemma}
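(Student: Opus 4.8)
The plan is to prove the bound by a double-counting argument on the edges of the remaining subgraph, fed by the fact that $t$ is the load of a \emph{global} optimum. Fix an arbitrary phase and let $U' \subseteq U$ and $V' \subseteq V$ be the vertices still present at the start of that phase, and let $V'_{\mathrm{high}} \subseteq V'$ denote the remaining vertices of $V'$ whose remaining degree exceeds $2at$. The goal is to show $|V'_{\mathrm{high}}| \leq |V'|/2$. I will sandwich the quantity $m := |E(U',V')|$, the number of surviving edges, between a lower bound driven by $V'_{\mathrm{high}}$ and an upper bound driven by the degree cap $a$ and the optimum load $t$.

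First I would bound $m$ from above. The key structural invariant is that a vertex of $U$ is deleted exactly when one of its $V$-neighbors is deleted (the deletion step removes, together with each removed low-degree $v \in V$, all of $v$'s neighbors in $U$). Consequently, any $u$ that survives into $U'$ has \emph{all} of its original neighbors still present in $V'$, so its remaining degree equals its original degree and is at most $a$. Summing over $U'$ gives $m \leq a\,|U'|$. For the lower bound, every vertex of $V'_{\mathrm{high}}$ contributes more than $2at$ surviving edges, so $m > 2at\,|V'_{\mathrm{high}}|$. Combining the two estimates yields $|V'_{\mathrm{high}}| < |U'|/(2t)$, so the whole statement reduces to proving $|U'| \leq t\,|V'|$.

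The step $|U'| \leq t\,|V'|$ is where I expect the real content to sit. Here I would invoke a fixed optimum backup placement $S$ of the \emph{whole} graph $G$, whose load is at most $t$ by assumption. Because every surviving $u \in U'$ has all of its neighbors inside $V'$, the neighbor that $u$ selects in $S$ also lies in $V'$; hence all selections made by $U'$-vertices in $S$ fall on $V'$. Since each vertex of $V'$ receives at most $t$ selections in total in $S$, it receives at most $t$ from $U'$, and therefore $|U'| \leq t\,|V'|$ (isolated survivors make no selection and only slacken the count). Substituting back gives $|V'_{\mathrm{high}}| < t\,|V'|/(2t) = |V'|/2$, as required.

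The main obstacle is precisely this inequality $|U'| \leq t\,|V'|$: the tempting shortcut of appealing to the optimum of the \emph{induced} subgraph $G[U'\cup V']$ is unsound, since the optimal backup load is not obviously monotone under vertex deletion. The correct route is to keep working with the single global optimum $S$ and to exploit the survival invariant — that a vertex of $U'$ has never lost a neighbor — which is exactly what confines its $S$-selection to $V'$ and lets the load bound $t$ carry the counting. I would also verify the degenerate cases (degree-zero vertices in $U'$, or the degree-zero vertices that the guard $\deg(v)\neq 0$ allows to linger in $V'$), but these only make the inequalities easier and do not affect the final bound.
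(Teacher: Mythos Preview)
Your proposal is correct and follows essentially the same argument as the paper: the paper also uses the survival invariant that every $u \in U_i$ has all of its neighbors inside $V_i$, applies the global optimum to conclude $|U_i| \le t\,|V_i|$, and combines this with the degree bound $a$ on the $U$-side via an edge count. The only cosmetic difference is that the paper phrases the argument as a proof by contradiction rather than a direct double-counting, and your write-up is more explicit about why one must use the global optimum $S$ rather than the optimum of the induced subgraph.
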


\begin{proof}
We denote the number of remaining vertices in $V$ in the beginning of each phase $i$ by $n_i$. We denote this set of remaining vertices of $V$ (respectively, $U$) by $V_i$ (resp., $U_i$). Denote also the number of remaining neighbors of a vertex $v$ in round $i$ by $deg_i(v)$. We prove our claim by contradiction. Assume that in a certain phase $i$ the number of vertices $v \in V_i$ with $deg_i(v) > 2at$ is more than $\frac{1}{2} n_i$. 
Thus, the degrees sum of vertices in $V_i$ in this stage is at least $\sum deg_i(v), v \in V_i > \frac{n_i}{2} \cdot 2at = n_{i} \cdot at$. Recall that $t$ is a parameter that quantifies the optimal solution. It is important to emphasize that the meaning of $t$ as the optimal backup is the following. For each $u \in U$, and in particular $u \in U_i$, there exists a selection of some $v \in V$ for backup, such that after those selections have been made, the degree of each $v \in V$ in the subgraph induced by the selected edges is $deg(v) \leq t$. Note also that all neighbors of vertices of $U_i$ are in $V_i$. Indeed, any vertex in $U$ with a neighbor outside of $V_i$ has been removed in an earlier stage. Hence, in the optimal solution, all selection of vertices of $U_i$ are vertices of $V_i$, and the maximum load is $t$. Thus, the size $|U_i|$ of $U_i$, in the optimal solution, is
 $|U_i| \leq n_{i} \cdot t$. \ \ \ \ \ \ \ \ \ \ \ \ \ \ \ \ \ \ \ \ \ \ \ \ \ \ \ \ \ \ \ \ \ \ \ \ \ \ \ \ \ \ \ \ $(*)$ \\ But each vertex in $U_i$ has degree at most $a$, and $\sum deg_i(v), v \in V_i$ is greater than $n_{i} \cdot at$, by the above assumption. Since in phase $i$ all remaining neighbors of vertices in $V_i$ are in $U_i$, 
it follows that the number of vertices in $U_i$ is greater than $n_{i} \cdot at / a = n_{i} \cdot t$. This is a contradiction to $(*)$.
\end{proof}

\begin{lemma}
The algorithm terminates within $O(\log n)$ rounds.
\end{lemma}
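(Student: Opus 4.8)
The plan is to track how fast the set $V$ shrinks and show that its size is (at least) halved in every phase, so that Lemma~\ref{lemmat} immediately yields a geometric decay. Let $n_i = |V_i|$ denote the number of vertices of $V$ remaining at the start of phase $i$, so that $n_0 \le |V| \le n$. By Lemma~\ref{lemmat}, at most $n_i/2$ of these vertices have more than $2at$ remaining neighbors; equivalently, at least $n_i/2$ of them have degree at most $2at$. Every such vertex of positive degree sends an ``allow backup'' message in phase $i$ and is then deleted from $V$ (together with its neighbors in $U$). Hence, writing $L_i$ for the set of vertices removed from $V$ in phase $i$, we have $|L_i| \ge n_i/2$, and therefore $n_{i+1} = n_i - |L_i| \le n_i/2$.

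Iterating this bound gives $n_i \le n/2^i$, so after $i = \lceil \log_2 n \rceil$ phases we obtain $n_i < 1$, i.e., no vertices of $V$ remain. Since a vertex of $U$ is deleted precisely when one of its $V$-neighbors is removed, which is also the round in which it makes its backup selection, the emptying of $V$ forces every vertex of $U$ to have been processed as well. As each phase is implemented in a single communication round, the algorithm halts within $O(\log n)$ rounds, which is the claim.

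The step I expect to demand the most care is the boundary bookkeeping rather than the geometric argument itself. First, Lemma~\ref{lemmat} is stated with the strict inequality ``more than $2at$'', whereas the removal rule uses $\deg(v) < 2at$; I would reconcile these by reading the deletion threshold as $\le 2at$ (folding vertices of degree exactly $2at$ into the removed set), which is precisely the bound the lemma supplies and which preserves the ``at least half'' conclusion. Second, isolated vertices (degree $0$) of $V$ are never triggered by the rule, but they carry no load and have no neighbor in $U$ that could select them, so I would exclude them from the count $n_i$ and regard them as already finished. Once these two points are dispatched, the factor-$2$ shrinkage per phase is clean and the $O(\log n)$ round bound follows at once.
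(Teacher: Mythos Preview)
Your proof is correct and follows essentially the same route as the paper: invoke Lemma~\ref{lemmat} to conclude that at most half of the surviving $V$-vertices have degree exceeding $2at$, deduce that $|V_i|$ halves each phase, and read off the $O(\log n)$ bound from the geometric decay. Your explicit treatment of the threshold mismatch (strict versus non-strict $2at$) and of degree-$0$ vertices actually goes beyond the paper, which silently uses the $\le 2at$ reading from its prose description and does not mention isolated vertices at all.
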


\begin{proof}
As proven in the lemma above, at any phase $i$ the number of vertices $v \in V$ with $deg_i(v) > 2at$ is at most $\frac{1}{2}$ of remaining vertices in that phase. Thus, in the first round of the algorithm, the size of the subset $\{ v \in V \mid deg_1(v) > 2at \}$ is at most $\frac{1}{2} n_1 = \frac{1}{2}|V|$. Similarly, in the next round, the size of the subset $\{ v \in V \mid deg_2(v) > 2at \}$ is at most $\frac{1}{2} n_2 = \frac{1}{4} n_1$, and so forth. After $R$ rounds, for a positive integer $R$, the number of vertices in $V$ with degree greater than $2at$ is bounded by $n_1 / 2^R$. Since the algorithm completes all backup placements in the round when no vertices in $V$ with degree greater than $2at$ remain, it terminates within $O(\log n_1) = O(\log n)$ rounds.
\end{proof}

\begin{lemma}
Each vertex in $V$ is selected by at most $2at$ vertices in $U$.
\end{lemma}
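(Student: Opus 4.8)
The plan is to track precisely \emph{when} and \emph{by whom} a vertex $v \in V$ can be selected for backup. The crucial observation is that a vertex $u \in U$ selects $v$ only upon receiving an ``allow backup'' message from $v$, and by inspection of the algorithm $v$ broadcasts such a message in exactly one phase---the phase in which its remaining degree drops below the threshold $2at$, which is precisely the phase in which $v$ is removed from $V$. So I would first pin down this single phase, and then count how many vertices of $U$ can possibly respond to that one broadcast.

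First I would argue that $v$ can be selected in no other phase. In every phase before this one, $v$'s remaining degree exceeds $2at$, so the condition of the loop is not met, $v$ stays silent, and no vertex of $U$ has received a message from $v$ that would cause it to pick $v$. After this phase, $v$ has been deleted from $V$ and is no longer a candidate. Hence every vertex of $U$ that ever selects $v$ must do so in this one phase $i$, in response to $v$'s single broadcast.

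It then remains to bound the number of such selectors. By the broadcast condition, $deg_i(v) \leq 2at$, where $deg_i(v)$ is the number of neighbors of $v$ \emph{still present} at phase $i$. The message reaches only these $deg_i(v)$ vertices of $U$, and each recipient selects at most one of the neighbors that messaged it. Therefore at most $deg_i(v) \leq 2at$ vertices of $U$ select $v$, giving the claimed bound. The one subtle point---the main obstacle, if any---is to make sure the quantity being counted is the \emph{remaining} degree $deg_i(v)$ at the moment of broadcast, not $v$'s degree in the original graph $G$, which may be arbitrarily large; this is legitimate because neighbors of $v$ removed in earlier phases are already gone and can no longer select $v$, so they never contribute to its load.
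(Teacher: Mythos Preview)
Your proposal is correct and follows essentially the same approach as the paper's proof: both argue that all selections of a given $v \in V$ occur in the single phase in which $v$'s remaining degree first drops to at most $2at$ (and $v$ is then removed), so the number of selectors is bounded by that remaining degree. Your version is a bit more explicit about why earlier and later phases contribute nothing and about the distinction between remaining and original degree, but the underlying argument is the same.
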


\begin{proof}
In each round vertices in $U$ select neighbors in $V$ only if these neighbors have degree at most $2at$. Moreover, each vertex in $V$ is selected in the same round by all neighbors that choose it during the algorithm. (This is because in the end of the round when a vertex is selected for the first time, it is removed from $V$.) But a vertex cannot be selected by more than $2at$ neighbors of $U$, since in that round the vertex degree is bounded by $2at$.
\end{proof}

Next, we extend our algorithm to the scenario that the optimum value $t$ is not known to the vertices. (But vertices know $a$. In Section \ref{sc:arb} we elaborate on the reason of this assumption regarding knowledge of $a$.) In this case when $t$ is not known, we invoke our algorithm several times, starting with an estimation $t' = 1$ for $t$, and doubling $t'$ after each invocation. As long as $t' < t$, an invocation may remove only some of the vertices, in contrast to the case $t' \geq t$, in which all vertices are removed, as shown in the proof of Lemma \ref{lemmat}. Thus, each invocation is performed on the set of remaining vertices from the previous invocation. In the first time when $t'$ becomes at least $t$, all remaining vertices are removed, and we are done. Note that in this stage it holds that $t' < 2t$. In each invocation a vertex is selected by at most $2at'$ neighbors, and so the maximum number of selections of the same vertex in all invocations is $O(at)$. The number of rounds required for this computation is $O(\log n \log t)$. We summarize this in the next theorem.

\begin{theorem}
Suppose that the optimum backup placement value $t$ is not known to the vertices before execution. Then we compute a backup placement with a load of $O(at)$ in $O(\log n \log t) = O(\log^2 n)$ rounds.
\end{theorem}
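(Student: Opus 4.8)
The plan is to obtain the unknown-$t$ result by wrapping the already-analyzed procedure \textsc{Bipartite-BP} (Algorithm \ref{algo3}) in a geometric search over the optimum value. Concretely, I would run the procedure with guesses $t' = 1, 2, 4, 8, \dots$, each doubling of $t'$ launched on the residual graph left over by the previous invocation (the vertices of $U$ and $V$ not yet removed). Two things then have to be established: that the combined output is still an $O(at)$-load backup placement, and that the whole process costs $O(\log n \log t)$ rounds. The orientation $V\to U$ contributes load at most $a$ for free, exactly as in the known-$t$ case, so the real work is in the $U\to V$ selections produced across the invocations.

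For the load bound I would reuse, verbatim with $t'$ in place of $t$, the earlier lemma stating that within a single invocation with threshold $2at'$ every vertex of $V$ is selected by at most $2at'$ neighbors. The key structural observation is that each $v\in V$ is peeled (and therefore receives all of its selections) in \emph{exactly one} invocation, namely the round in which its current degree first drops below the active threshold; after that it is deleted from $V$. Since the search halts the first time $t'\ge t$, and $t'$ only doubles, every guess used satisfies $t' < 2t$ (the previous guess was $<t$). Hence any peeled $v$ is selected by fewer than $2at' < 4at$ vertices of $U$, and together with the $V\to U$ direction this gives an overall load of $O(at)$.

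The time bound is where I expect the main obstacle to sit. The number of invocations is clearly $O(\log t)$ (from $t'=1$ up to the first $t' \in [t,2t)$), so it suffices to charge $O(\log n)$ rounds to each invocation. For $t' \ge t$ this is immediate from the termination lemma above, provided one first checks that the optimality parameter $t$ still governs the residual graph: a surviving $u\in U$ must have all of its neighbors surviving in $V$ (otherwise $u$ would already have been removed), so $u$'s choice in a global optimum lands in the current $V$, and the counting inequality $|U_i|\le n_i\cdot t$ underlying Lemma \ref{lemmat} carries over unchanged. The genuine difficulty is that for $t' < t$ the counting argument of Lemma \ref{lemmat} breaks (it crucially needs $t'\ge t$ to derive a contradiction), so such an invocation is not guaranteed to shed half of $V$ per round and could in principle run for many rounds. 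I would dispose of this by capping every invocation at the fixed $O(\log n)$ round budget supplied by the termination lemma: undershooting guesses simply peel whatever they can within the budget (which only helps and never harms correctness or the load bound), while the first guess with $t'\ge t$ provably clears \emph{all} remaining vertices inside that same budget, so the algorithm terminates with a complete placement. Multiplying the $O(\log t)$ invocations by the $O(\log n)$ per-invocation budget yields $O(\log n \log t)$ rounds, and since $t\le n$ this is $O(\log^2 n)$.
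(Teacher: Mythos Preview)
Your proposal is correct and follows the same doubling-search strategy as the paper: run \textsc{Bipartite-BP} with guesses $t'=1,2,4,\dots$ on the residual graph, stop once everything is removed, and observe that every guess actually used satisfies $t'<2t$ so the per-invocation load bound $2at'$ yields $O(at)$ overall.

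On one point you are actually more careful than the paper. The paper simply asserts that ``the number of rounds required for this computation is $O(\log n\log t)$'' without saying how long an undershooting invocation ($t'<t$) is allowed to run; since Lemma~\ref{lemmat} genuinely fails for $t'<t$, such an invocation need not halve $V$ each round and could stall. Your explicit $O(\log n)$-round cap per invocation is exactly the right patch, and your verification that the optimum of the residual instance is still at most $t$ (because every surviving $u\in U$ has all its $V$-neighbors surviving) is the missing ingredient that makes the first $t'\ge t$ invocation provably clear everything within its budget. This is a detail the paper leaves implicit.
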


We note that the $O(\log t)$ executions can be performed in parallel. To this end, we assign arrays $A_u$ of size $\log n$ to each vertex $u \in U$, and execute the algorithm independently for $\log n$ times in parallel. (We do not know t, but we know that $t < n$.) Each such execution requires $O(\log n)$ time, but their parallel invocations result in $O(\log n)$ rounds overall, rather than $O(\log^2 n)$. Now, we have $\log n$ results for each vertex in $U$. We note that $A_u[i]$ may have not reached an answer, if $2^i < t$. Otherwise, $A_u[i]$ must contain a selection for $u$, by the correctness of the original algorithm, that works with any value that is at least the optimum $t$. For each vertex $u \in U$, we take the result for the smallest $i$ for which a backup placement $A_u[i]$ has been found. We know that $2^i < 2t$, where $t$ is the optimum.  
Next, we analyze how many vertices in $U$ may select the same vertex in $V$ in the worst case. Consider such a vertex $v \in V$, and denote $u_1,u_2,...,u_q$ the set of vertices that selected $v$. Our goal is to bound $q$. Recall that each selection of a vertex $u_j, j \in [q]$ is stored in $A_{u_j}[i]$, where $i < \log (2t)$. Moreover, there are at most $2a$ vertices that selected $v$ in $\{A_{u_1}[1], A_{u_2}[1],...,A_{u_q}[1]\}$, at most $4a$ vertices that selected $v$ in $\{A_{u_1}[2], A_{u_2}[2],...,A_{u_q}[2]\}$, at most $8a$ vertices that selected $v$ in $\{A_{u_1}[3], A_{u_2}[3],...,A_{u_q}[3]\}$, etc. In general, for $i > 0$, there at most $2^i \cdot a$ vertices that selected $v$ in $\{A_{u_1}[i], A_{u_2}[i],...,A_{u_q}[i]\}$. Thus, the overall number of vertices that selected $v$ in all these arrays cells with indices $1,2,...,i$ is bounded by $2a \cdot \sum_{j = 1}^{i} 2^j = 2a \cdot (2^{i + 1} - 2) < 8 a \cdot t$. 

\begin{corollary} \label{cor:selections}
The number of selections of the same vertex in $V$ by vertices in $U$ is bounded by $8at$.  
\end{corollary}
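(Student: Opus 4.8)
The plan is to fix an arbitrary server vertex $v \in V$, let $u_1, u_2, \ldots, u_q \in U$ denote the clients that select $v$ in the final (parallelized) solution, and bound $q$ by $8at$. The crucial structural observation is that every selection in the final solution is inherited from exactly one of the $\log n$ parallel executions: a client $u$ commits to $A_u[i_u]$, where $i_u$ is the smallest index for which the $i_u$-th execution produced a valid backup placement for $u$. Different clients selecting $v$ may inherit their choice from \emph{different} executions, so I cannot charge all of $q$ to a single run. Instead I would partition the selectors of $v$ according to the index $i$ of the execution that produced their selection, bound each block separately, and sum.

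For a fixed index $i$, the $i$-th parallel execution is an independent run of the known-$t$ procedure (Algorithm \ref{algo3}) with the guess $t' = 2^i$ substituted for the optimum. By the earlier lemma guaranteeing that each vertex of $V$ is selected by at most $2at$ vertices of $U$ when the algorithm runs with parameter $t$, applied here with $2^i$ in place of $t$, the number of clients $u$ with $A_u[i] = v$ is at most $2a \cdot 2^i$. The block of selectors whose committed index equals $i$ is a subset of these, hence also has size at most $2a \cdot 2^i$. Next I would control the range of $i$: since a valid placement is guaranteed as soon as the guess reaches the optimum $t$, the minimal valid index $i_u$ of any client never exceeds $\lceil \log t \rceil$, where $2^{i_u} < 2t$. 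Thus only indices with $2^i < 2t$ contribute.

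Summing over the relevant indices gives $q \le \sum_i 2a \cdot 2^i$, a geometric series of ratio $2$ whose largest term occurs at the top index, where $2^i < 2t$ makes that term below $4at$. Since such a series is less than twice its largest term, the total is below $8at$, the claimed bound; equivalently, $\sum_{i=1}^{I} 2a \cdot 2^i = 2a(2^{I+1}-2) < 4a \cdot 2^{I} < 8at$ using $2^I < 2t$.

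The argument is mostly routine; the one point I would treat as the main obstacle is the bookkeeping that justifies summing over indices rather than taking a maximum over executions. One must be convinced that $2a \cdot 2^i$ is a legitimate upper bound on the block committing at index $i$ (it is, being contained in the set of all index-$i$ selectors of $v$), and that the geometric growth of these per-execution bounds is harmless because only $O(\log t)$ indices appear and the guess at the top index is still below $2t$. Once this is set up, the geometric sum collapses to $O(at)$ and the explicit constant $8$ falls out.
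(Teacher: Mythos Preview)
Your proposal is correct and follows essentially the same argument as the paper: fix $v\in V$, partition its selectors by the execution index $i$ at which each one committed, bound each block by $2a\cdot 2^i$ via the per-execution load lemma, observe that only indices with $2^i<2t$ participate, and sum the resulting geometric series to get $q<8at$. The paper's proof is identical in structure and arithmetic, differing only in minor indexing conventions.
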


Finally, we note that since the dependency on $n$ is logarithmic, vertices need to know only a polynomial estimation of $n$, rather than its exact value. We summarize the results of this section in the next theorem.
\begin{theorem} \label{thm:bipartiteAlg}
Given a bipartite graph $G = (U,V,E)$  with maximum degree $a$ in $U$, one can compute a backup placement with a load of $O(at)$ in $O(\log n)$ rounds, even if the optimum load $t$ is not known to the vertices, and vertices have only polynomial estimation of $n$.
\end{theorem}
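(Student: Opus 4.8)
The plan is to assemble the pieces already established rather than prove anything fundamentally new: the known-$t$ procedure (Algorithm \ref{algo3}) with its termination and per-vertex load guarantees, the parallel array scheme described just before Corollary \ref{cor:selections}, and that corollary itself. First I would dispense with exact knowledge of $n$. Every quantity in the construction — the number of parallel copies, the array length, the round count — depends on $n$ only through $\log n$. Hence replacing $n$ by a polynomial estimate $n'$ with $n \le n' \le \mathrm{poly}(n)$ multiplies all these quantities by a constant factor and preserves the asymptotics, so I may assume each vertex knows such an $n'$ and work throughout with $\log n' = O(\log n)$.

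Next I would instantiate the parallel scheme. Each vertex $u \in U$ holds an array $A_u$ of length $\log n'$, and for each index $i \in \{1,\dots,\log n'\}$ we run an independent copy of Algorithm \ref{algo3} with guessed optimum $t' = 2^i$, all copies executing in lockstep. By the termination lemma each copy halts within $O(\log n') = O(\log n)$ rounds, and since the copies do not interfere, the entire bundle finishes in $O(\log n)$ rounds, improving on the sequential $O(\log n \log t)$ bound. Because $t < n \le n'$, the index $i^* = \lceil \log t \rceil \le \log n'$ satisfies $2^{i^*} \ge t$, so by correctness of Algorithm \ref{algo3} the copy with $t' = 2^{i^*}$ removes all vertices and records a selection in $A_u[i^*]$ for every $u$. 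Each $u$ then commits to the selection stored in the smallest index $i$ for which $A_u[i]$ is valid; this $i$ satisfies $2^i < 2t$.

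Finally I would bound the load. The combined solution assigns to each $u \in U$ the edge chosen in its committed cell, so the vertices of $U$ for which a fixed $v \in V$ can be blamed are exactly those tallied across the arrays in the derivation preceding Corollary \ref{cor:selections}: at most $2^i \cdot a$ commitments occur at index $i$, and summing this geometric series over $i = 1,\dots,\lceil \log 2t \rceil$ gives at most $8at$. This is precisely Corollary \ref{cor:selections}, yielding load $O(at)$, and the theorem follows.

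The main obstacle is the load argument in this last step, not the timing. Running $\log n'$ copies in parallel trivially preserves the $O(\log n)$ round bound, but the output is a patchwork: distinct vertices of $U$ may commit to selections made in different copies, so the load on a fixed $v$ is a sum of contributions drawn from several thresholds $t'$. One must verify that committing, per vertex, to the \emph{smallest} successful index keeps this sum geometric — that the number of $u$'s landing at index $i$ scales like $2^i a$ — so that large guesses $t' \gg t$ cannot inflate the count. This bookkeeping is exactly what Corollary \ref{cor:selections} performs, so once the parallel construction is in place the proof reduces to invoking it.
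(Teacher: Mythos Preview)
Your proposal is correct and follows essentially the same route as the paper: the theorem is stated there as a summary of the preceding narrative, and you have reconstructed exactly that narrative---run $O(\log n)$ copies of Algorithm \ref{algo3} in parallel with geometrically increasing guesses $t'$, commit each $u\in U$ to its smallest successful index, invoke Corollary \ref{cor:selections} for the $8at$ load bound, and observe that a polynomial estimate of $n$ only perturbs logarithmic quantities by constants. The only minor omission is that you do not explicitly mention the trivial $V\to U$ direction (each $v\in V$ picks an arbitrary neighbor in $U$, incurring load at most $a$ on the $U$ side), but this is handled in the paper's setup before Algorithm \ref{algo3} and does not affect the argument.
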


\section{Backup Placement in Graphs of Bounded Arboricity}
\label{sc:arb}
Following our method for backup placement in bipartite graphs, we can also prove $O(\log n)$ time complexity for backup placement in graphs of constant arboricity, and in particular, in planar graphs. We still assume that all vertices know the load value $t$ of the optimum backup placement, i.e., the maximum number of selected vertices in an optimum solution. The goal of our algorithm is obtaining a maximum load of $O(a \cdot t)$, where $a$ is the arboricity of the input graph. We note that our algorithm can be extended to the scenario where the optimum load $t$ is not known, similarly to Section \ref{BackupPlacementinBipartiteGraphs}. Nevertheless, vertices still need to know $a$. We note, however, that this information is often available to the vertices. For example, in planar graphs, vertices know that the arboricity $a$ is bounded by $3$ \cite{nash1961edge,nash1964decomposition}.

In order to fulfill our goal, we use the Procedure Partition algorithm \cite{barenboim2010sublogarithmic,barenboim2013distributed}, which receives as input a graph $G$ with arboricity $a$, and partitions it into $\ell = O(\log n)$ sets, $H_1,H_2,...,H_{\ell}$, such that the number of neighbors of a vertex $v \in H_i$, $i \in [\ell]$, in the set $H_i \cup H_{i + 1} \cup ... \cup H_{\ell}$ is at most $(2 + \epsilon)a$, for an arbitrarily small constant $\epsilon > 0$. The time complexity of this algorithm is $O(\log n)$.
For the sake of simplicity, we set $\epsilon=1$. Consequently, the algorithm partitions the vertices of $G$ into $\ceil{\frac{2}{\epsilon}\log n} = 2\log n$ sets, such that each vertex $v$ has at most $3a$ neighbors in the union of sets with greater or equal index to the $H$-set of $v$. 

We note that the sets $H_1,H_2,...,H_{\ell}$ obtained using Procedure Partition have the following property. For each vertex in $H_i$ with $i > 1$, there must be a neighbor in a set with a smaller $H$-index. For example, for a vertex in $H_2$ there must be a neighbor in $H_1$, and for a vertex in $H_3$ there must be a neighbor in $H_2$ or $H_1$, and so forth. The reason, according to Procedure Partition \cite{barenboim2010sublogarithmic}, is that all vertices that did not join $H_1$ have an initial degree that is greater than $3a$, and once they join an $H$-set of a greater index, their degree becomes at most $3a$. This means that each of them has a neighbor that has been removed earlier, i.e., moved to an $H$-set with a smaller index. For the backup placement problem, this implies that it is possible to create a backup placement from any $H_i, i>1$, to a vertex in $H_j, j<i$. To this end, all vertices in $H_i, i > 1$, select in parallel neighbors in sets with smaller indices than their own, one neighbor each. Since the number of neighbors in sets of greater or equal index is bounded by $3a$, each vertex is selected by at most $3a$ neighbors in this stage.
In the case of planar graphs, $a \leq 3$ \cite{nash1961edge,nash1964decomposition}. 
Thus, in planar graphs all vertices in $H_2,...,H_{\ell}$ make selections with a load of up to $3a = 9$ on any vertex in $G$.

It remains to select placements for vertices in $H_1$. For each vertex in $H_1$ that has a neighbor in $H_1$, we select such a neighbor arbitrarily. Since the maximum degree in $H_1$ is bounded by $3a$, each vertex can be selected by at most $3a$ neighbors. However, some vertices in $H_1$ may have no neighbors in this set, but only neighbors in sets of greater indices. Next, we describe a solution for these vertices, and thus, complete the description of placements selection for all vertices in $H_1,H_2,...,H_{\ell}$.
We denote $U = \{ \mbox{all vertices in } H_1 \mbox{ that have no neighbors in } H_1 \}$, $V = H_2 \cup H_3 \cup ... \cup H_{\ell}$, $E' = \{(u,v), u \in U, v \in V, (u,v) \in E\}$, and execute our bipartite algorithm from Section \ref{BackupPlacementinBipartiteGraphs} on $G' = (U,V,E')$. Consequently, each vertex in $U$ finds a backup placement in $V = H_2 \cup H_3 \cup ... \cup H_{\ell}$. In addition, according to Procedure Partition, each vertex $v \in V$ has a neighbor in a smaller-index $H$-set, and thus each $v \in V$ can perform backup placement to such a neighbor in a smaller-index set. Finally, the vertices that are neither in $U$ nor in $V$ belong to $H_1$, and have neighbors in $H_1$. They select arbitrary neighbors in $H_1$ for backup placements. This completes the description of placements selection for all vertices in the input graph. 

The pseudocode is provided in Algorithm \ref{algo4}.

\begin{algorithm}[H]

\caption{The Bounded Arboricity Graph Distributed Backup Placement Algorithm}
\label{algo4}
\begin{algorithmic}[1]
\Procedure{Bounded-Arboricity-BP($G = (V,E), a, t$)}{}
\State for each $v \in V$, $v.allow\_backups = True$.
\State $H_1,H_2,...,H_{\ell}$ = Procedure-Partition(G,a)
\State each vertex $v \in H_i, \ell \geq i>1$ in parallel does:
\State \qquad select one arbitrary neighbor in $H_j, j<i$
\State each vertex $v \in H_1$ in parallel does:
\If {$v$ has a neighbor in $H_1$}
\State $v$ selects a neighbor in $H_1$ arbitrarily
\EndIf
\State BIPARTITE-BP($U = \{ \mbox{all vertices in } H_1 \mbox{ that have no neighbors in } H_1 \}$, $V = H_2 \cup H_3 \cup ... \cup H_{\ell}$, $E' = \{(u,v), u \in U, v \in V, (u,v) \in E\}$, \ $a$,   \  $t$)
\EndProcedure
\end{algorithmic}
\end{algorithm}

The running time of the algorithm is dominated by the execution of Procedure Partition, followed by our Bipartite algorithm of Section \ref{BackupPlacementinBipartiteGraphs}. Both these algorithms require $O(\log n)$ time. It remains to analyze the load of the algorithm. This is done in the next theorem.

\begin{theorem}
For any positive parameter $a$, we compute Backup Placement in graphs of arboricity $a$  within $O(\log n)$ time. The resulting load is $O(at)$, where $t$ is the optimum load of the input graph.
\end{theorem}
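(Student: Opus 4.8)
The plan is to treat the running time and the load separately, since the running-time claim is essentially immediate and the real content of the theorem lies in the load bound. For the running time I would observe that Algorithm~\ref{algo4} performs exactly one invocation of Procedure Partition, a single parallel round in which every vertex of $H_2,\dots,H_\ell$ picks a smaller-index neighbor, one parallel round for the $H_1$-vertices that possess an $H_1$-neighbor, and one call to the bipartite routine of Section~\ref{BackupPlacementinBipartiteGraphs}. Procedure Partition terminates in $O(\log n)$ rounds, and by Theorem~\ref{thm:bipartiteAlg} so does the bipartite subroutine; the two intermediate selection steps cost $O(1)$ rounds each. Hence the overall running time is $O(\log n)$.

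For the load I would classify every vertex according to the stages in which it can possibly be selected, bound each contribution, and then sum. Two facts drive the argument. First, the guarantee of Procedure Partition with $\epsilon=1$: every $v\in H_i$ has at most $3a$ neighbors in $H_i\cup H_{i+1}\cup\cdots\cup H_\ell$, and in particular every vertex of $H_1$ has total degree at most $3a$ (since for $i=1$ this union is the whole vertex set). Second, the bipartite guarantee of Corollary~\ref{cor:selections} and Theorem~\ref{thm:bipartiteAlg}: when the $U$-side has maximum degree $d$, each $V$-vertex is selected by $O(dt)$ vertices of $U$.

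I would then handle the two vertex classes. A vertex $v\in H_j$ with $j\ge 2$ can be selected only in two ways: by strictly-higher-index neighbors during the smaller-index selection step, of which there are at most $3a$ because $v$ has at most $3a$ neighbors of index $\ge j$; and by the $U$-vertices during the bipartite call, where, since $U\subseteq H_1$ has maximum degree at most $3a$ in $G'$, the bipartite guarantee bounds this contribution by $O(3a\cdot t)=O(at)$. Summing yields $O(at)$. A vertex $v\in H_1$ can be selected only in the smaller-index step by its neighbors in $H_2\cup\cdots\cup H_\ell$, at most $3a$ of them because its total degree is at most $3a$, and in the $H_1$-internal step by its $H_1$-neighbors, again at most $3a$; crucially, $H_1$-vertices are never targets of the bipartite call, whose targets all lie in $V=H_2\cup\cdots\cup H_\ell$. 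Thus such a vertex carries load at most $6a=O(at)$. Since every vertex falls into one of these two classes, the maximum load is $O(at)$. For the case in which $t$ is unknown, I would reuse the doubling / parallel-array reduction of Section~\ref{BackupPlacementinBipartiteGraphs} unchanged, which inflates the bound by only a constant factor.

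The main obstacle is bookkeeping rather than any single hard step: one must verify that the three selection mechanisms are counted without overlap, that $H_1$-vertices are correctly excluded from the bipartite targets, and --- the subtle point --- that the degree parameter handed to the bipartite routine is $3a$ (the residual degree bound from the partition) rather than $a$, so that the load guaranteed there is $O(at)$. The key structural observation that makes everything go through is that the bipartite subroutine is applied only to the residual set $U$ of $H_1$-vertices having no neighbor inside $H_1$, while every other vertex obtains its placement by the trivial smaller-index rule; once this decomposition is in place, the bounds follow mechanically from the two facts above.
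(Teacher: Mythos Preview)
Your proposal follows the same decomposition as the paper and the running-time and per-stage load bounds are handled correctly. There is, however, one genuine missing link in the load argument.

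When you invoke the bipartite guarantee and write ``the bipartite guarantee bounds this contribution by $O(3a\cdot t)=O(at)$,'' the symbol $t$ appearing in Theorem~\ref{thm:bipartiteAlg} and Corollary~\ref{cor:selections} is the optimum load of the \emph{bipartite instance} $G'=(U,V,E')$, not the optimum load of the original graph $G$. To conclude $O(at)$ in terms of the optimum of $G$ you must verify that the optimum of $G'$ is at most the optimum $t$ of $G$. The paper supplies exactly this step: every vertex of $U$ is, by definition, an $H_1$-vertex with no $H_1$-neighbor, so all of its $G$-neighbors already lie in $V$; hence the selections made by $U$ in any optimal solution for $G$ constitute a feasible $U\to V$ assignment in $G'$ with load at most $t$, and therefore the optimum of $G'$ is at most $t$. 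Without this observation your bound on the bipartite contribution is stated in the wrong parameter. Once you insert this one sentence, the rest of your plan matches the paper's proof essentially verbatim.
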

\begin{proof}
The running time follows from Theorem \ref{thm:bipartiteAlg}, and from the running time $O(\log n)$ of Procedure Partition of \cite{barenboim2010sublogarithmic}. Next, we analyze the load.
 The maximum number of backups assigned to a vertex in each step, except the step of executing the Bipartite algorithm, is $3a$. The maximum number of backups of the Bipartite algorithm is $24ta$, where $t$ is the load of the optimal backup placement of $G'$. (See Corollary \ref{cor:selections}. Indeed, each vertex in $U$ has at most $3a$ neighbors in $V$.) Our goal, however, is to analyze the optimal load of $G$, rather than $G'$. Nevertheless, we next argue that the optimal load of a backup placement for $G$ cannot be significantly smaller than that for $G'$. Thus, the above solution with load at most $3a + 24ta \leq 27ta$ is an $O(a)$-approximation to the optimal backup placement of $G$ as well.
 
Observe that the optimal load on vertices of $V$ in $G'$ is not greater than the load on these vertices in the optimal solution for $G$. Otherwise, the selections of vertices of $U$ in $G$ constitute a solution for $G'$ with a smaller load on $V$ than the optimal one. This is a contradiction. Thus, the execution of our Bipartite algorithm on $(U,V,E')$ results in an $O(a)$-approximation for the optimal load on $V$ in $G$. In addition to the selections made during this execution, our algorithm performs additional selections, made by vertices that do not belong to $U$. However, all these additional selections cause an additional load of at most $3a$ in each vertex. Hence, we achieve
 an $O(a)$-approximation to the optimal solution for $G$.  
 
\end{proof}

\section{Lower Bounds}
We demonstrate that our upper bounds are not far from optimal, by proving lower bounds for exact and approximate backup placement. We begin with a lower bound for an exact (optimal load) solution. Our proof deals with a slightly modified version of the problem, in which in addition to computing backup placement, each vertex outputs the number of neighbors that selected it. Note that a solution to the original problem can be transformed to a solution for the modified one, within an additional single round. Specifically, given a backup placement solution, all vertices communicate in parallel with the vertices they have selected. The number of communicating neighbors with each vertex is the required solution. Thus, a lower bound of $R$ rounds for the modified version implies a lower bound of $R-1$ for the original one.
\begin{lemma}
Optimal backup placement in unoriented trees of height $h$ requires $\Omega(h)$ rounds, for deterministic algorithms. In terms of the number of vertices $n$, the running time is $\Omega(\log n)$.
\end{lemma}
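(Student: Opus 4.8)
The plan is to prove the bound by a standard indistinguishability argument, where the real work lies in designing two trees whose optimal placements are forced to differ right at the root. I would fix a constant $d \ge 3$ and build two rooted trees $T_1$ and $T_2$ of height $h$ that are isomorphic on their top $h-1$ levels and differ only in the leaf layer: in $T_1$ every vertex at level $h-1$ has $d-1$ children (so the tree is essentially $d$-regular), whereas in $T_2$ every vertex at level $h-1$ has a single child. The root is given a number of children tuned to the optimal load of $T_1$, so that the root is \emph{tight}: it can be selected by all of its children without exceeding the optimum, but by no more.

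The combinatorial core is to characterize the optimal placements of the two trees by a bottom-up (leaf-to-root) analysis, and I expect this to be the main obstacle. In $T_1$ each leaf is forced to select its parent (its only neighbor), so every level-$(h-1)$ vertex already carries load $d-1$; as noted in the introduction, no regular tree admits load below $\Delta-1$, and here this value is attained only if these vertices receive nothing further. I would then show by induction up the levels that this \emph{saturation} propagates: whenever a level is saturated, the level above it cannot afford to select downward and must select upward, which in turn saturates the next level. Choosing the parity of $h$ appropriately makes level $2$ saturated, so that the root's children have no slack below them and are \emph{forced to select the root}; since the root has exactly the optimal number of children, this is consistent with optimality and is the only way to attain it. In $T_2$, by contrast, the single-child leaf layer leaves the level-$(h-1)$ vertices with load only $1$, so saturation no longer reaches the top, the optimum load is strictly smaller, the root therefore cannot be selected by all of its children, and an optimal placement \emph{forces the root's children to select their own children} instead. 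Establishing that these orientations hold in \emph{every} optimal solution (not merely some), i.e. that the optimal loads genuinely separate the two trees and pin down the root's neighborhood, is the delicate part and the crux of the proof.

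With the two instances in hand, the remainder is routine. I would assign identical IDs to the (isomorphic) top $h-1$ levels of $T_1$ and $T_2$, so that for every vertex within distance less than $h-1$ of the leaf layer the labeled radius-$r$ view is identical in both trees; in particular the root and each of its children have identical $r$-neighborhoods in $T_1$ and $T_2$ whenever $r < h-1$. A deterministic $r$-round algorithm computes each vertex's output as a function of its labeled $r$-ball, so it must produce the same selections at the root and at the root's children on both trees. But the optimal placements require opposite choices there, so the algorithm errs on at least one tree; hence $r \ge h-1 = \Omega(h)$. For the modified problem that also outputs loads, the same views force identical reported loads, which likewise cannot be correct for both trees.

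Finally I would convert height to size: for constant $d$ the tree has $n = \Theta(d^{\,h})$ vertices, so $h = \Theta(\log n)$ and the lower bound reads $\Omega(\log n)$. Because we control the IDs and make the two radius-$r$ views literally identical, the argument applies to \emph{unoriented} trees and to any deterministic algorithm, including ones that exploit IDs or attempt to infer an orientation.
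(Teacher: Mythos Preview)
Your proposal is correct and takes essentially the same route as the paper: the same pair of trees (a regular tree versus one with single-child leaves in the bottom layer), the same indistinguishability of the root's view in fewer than $h-1$ rounds, and the same $h=\Theta(\log n)$ conversion. The only cosmetic differences are that you derive the contradiction from the root's children's forced selections whereas the paper uses the root's reported load in the modified problem, and you spell out the bottom-up saturation argument that the paper merely asserts.
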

\begin{proof}
In order to prove our claim we employ two distinct trees. (1) A fully balanced tree, in which each vertex (except the leafs) has a constant number of children $d$, and the height $h$ is odd. (This makes the number of levels in the tree even.) (2) A "single leafs" tree with height $h$, in which each non-leaf vertex (except the root) has $d-1$ siblings, but all leafs are single children, e.g., they have no siblings at all. The only difference between the fully balanced tree and the single leafs tree is in the $h$ level, the furthest level from the root. (See Figure \ref{fig7} in Appendix A).
The optimal solution for each of the above trees is: \\ Tree (1): all vertices in even layers perform backups to their parents, while all vertices in odd layers perform backups to their children. In particular, all leafs perform backups to their parents. \\  Tree (2): each vertex which is not a leaf selects one of its children for backup, and only the leafs select their parents for backups.
Consequently,  in the optimal backup placement in Tree (1) the load is $d$, while in Tree (2) the load is $1$ in all levels except $h-1$, and the load is $2$ in level $h-1$. The reason for the load in level $h-1$ of Tree (2) is that both the single leaf and its grand-parent send their backups to the parent of the leaf. Next, we consider the root vertices in the two trees. Notice that in Tree (1) all the backups in level $1$ must be sent to the root, and thus its load is $d$. (Otherwise, some vertex is going to have a load greater than $d$.) On the other hand, in Tree (2) there are no backups at all of the root's children at the root, since they select their own children. Thus  the load on the root is $0$.

It follows that the optimal solution in both of the trees depends solely on the $h$ level. In order to prove that the lower bound for optimal solution of the backup placement problem in this case is $\Omega(h)$, we prove by contradiction that it is impossible to achieve the optimal solution while ignoring the $h$ level. Assume that there is a deterministic optimal solution with time complexity smaller than $h - 1$. When this algorithm is executed on the two trees, the roots of both trees perform exactly the same computations, since the subtrees of height $h - 1$ are the same in both trees.
Therefore, the roots compute the same solutions in both trees (1) and (2). However, the correct solutions of the roots are distinct in each of these trees. Specifically, the output of root (1) is "load $d$", while the output of root (2) is "load $0$".  Hence, this execution of the algorithm is correct only in at most one of the trees, while wrong in at least one of them necessarily. Thus, there is no optimal solution with time complexity lower than $h - 1$. For $d$-regular trees with a constant $d$, the height is $\Omega(\log n)$.
\end{proof}


\noindent \ \ Next, we provide a lower bound for approximate backup placement.

\begin{theorem}
$O(1)$-backup placement requires $\Omega(\sqrt {\frac{\log n}{\log \log n}})$ rounds, for deterministic algorithms.
\end{theorem}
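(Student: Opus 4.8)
The plan is to upgrade the indistinguishability argument of the previous lemma into a multi-scale, Linial/KMW-style construction, in which the true optimum is small everywhere yet any fast deterministic algorithm is fooled into creating a super-constant load. The two trees of the previous lemma are the right local gadgets; the new ingredient is a global instance whose radius-$R$ balls are \emph{all} indistinguishable, so that the algorithm cannot locate the single-leaf boundary that determines the correct answer.

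First I would fix a degree parameter $d$ and a depth parameter and build a tree (or a high-girth graph all of whose radius-$R$ balls are trees) with two features: every radius-$R$ neighborhood is isomorphic to a ball of the complete $d$-ary tree, and IDs can be assigned so that the \emph{order-types} of these balls coincide as well. This guarantees that within $R$ rounds no vertex can decide whether the layer $R$ steps below it consists of ordinary degree-$d$ internal vertices or of single-child leaves as in Tree (2) of the previous lemma; hence a deterministic $R$-round algorithm makes the identical structural selection at every such vertex. I would engineer the bottom layers (via the single-leaf gadget, whose optimum load is $O(1)$) so that the global optimum of the whole instance is $O(1)$, while the number of vertices is $n=d^{\Theta(R^{2})}$, the blow-up being the usual price of forcing identical views from both endpoints of every edge across $R$ levels.

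Next I would extract the load constraint from the approximation guarantee. Since $\mathrm{OPT}=O(1)$ on the instance, any algorithm with approximation factor up to polylogarithmic in $n$ must keep the load of \emph{every} vertex below some slowly growing bound $\beta=\mathrm{polylog}(n)$. Viewing each vertex as the root of its local $d$-ary tree, this forces at most $\beta$ of its $\approx d$ children to select it, so once $d\gg\beta$ at least $d-\beta$ children must select \emph{strictly downward}. By indistinguishability this holds simultaneously at essentially all internal vertices, so the directed selection edges overwhelmingly point toward the leaves. I would then run a counting/discharging argument down the tree: following the downward-pointing selections from a top vertex, the mass of selections that must be absorbed grows with depth, but the bounded-degree leaf layers (and the load cap $\beta$) cannot absorb it unless the descent is short, i.e.\ unless $R$ is comparable to the height. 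Balancing $\beta$, $d$ and $R$ against $n=d^{\Theta(R^{2})}$ --- choosing $d$ roughly polylogarithmic so that $\log d=\Theta(\log\log n)$ and $\log n=\Theta(R^{2}\log d)$ --- pins the threshold at $R=\Omega(\sqrt{\log n/\log\log n})$. Because the bound tolerates any polylogarithmic $\beta$, it applies a fortiori to $O(1)$-approximation.

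The hard part will be twofold. The first difficulty is the construction itself: producing an instance whose radius-$R$ balls all look like regular $d$-ary trees (so the algorithm is blind to the single-leaf boundary) while simultaneously guaranteeing a genuinely low-load global placement, and doing so under adversarial ID assignments so that no ID pattern lets a vertex break the symmetry within $R$ rounds. The second, and I expect the main obstacle, is making the downward-cascade counting quantitative enough to convert ``most children select downward'' into a contradiction that yields the precise $d$-versus-$R$ trade-off $n=d^{\Theta(R^{2})}$, rather than the weaker $\Omega(\log n)$-type bound that a naive height argument would give; reconciling the local regular-tree appearance with a small global optimum is exactly the crux of every KMW-type separation.
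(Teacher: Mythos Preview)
Your plan diverges sharply from the paper's proof, which is a short black-box reduction to the Kuhn--Moscibroda--Wattenhofer vertex-cover lower bound rather than a bespoke construction. The paper runs a hypothetical $k$-round $O(1)$-backup-placement algorithm $A$ directly on the KMW graph $G$ (girth $>2k$, minimum degree $\ge 2$). For each vertex $v$, its $k$-view in $G$ is a tree $T$; appending a single child to every depth-$k$ leaf of $T$ gives a tree $T'$ whose optimal load is $2$, and since $A$ cannot distinguish $G$ from $T'$ around $v$, the load at $v$ in $G$ is $O(1)$. Hence the selection subgraph $G'\subseteq G$ spans $V$ and has maximum degree $O(1)$. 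But then for any minimum vertex cover $S$ of $G$, adding the $G'$-neighbors of $S$ already yields all of $V$, so $|S|=\Omega(|V|)$ and outputting $V$ would be an $O(1)$-approximate vertex cover of the KMW graph in zero rounds---contradicting \cite{kuhn2004cannot}. No new hard instance is built and no cascade is counted; the indistinguishability with $T'$ is used once, to cap the load, and the contradiction is imported wholesale from the existing vertex-cover bound.

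The discharging step in your plan has a concrete gap. In backup placement every vertex makes exactly one selection, so nothing accumulates as you descend: if at each internal node at most $\beta$ children point up and the rest point down, the resulting assignment has load at most $\beta+1$ everywhere and is perfectly consistent---the leaves absorb nothing beyond the at most one selection coming from their parent. Your single-leaf tree in fact \emph{admits} an $O(1)$-load placement (every non-leaf selects a child, every leaf selects its parent), so a fast algorithm need not fail on it regardless of how IDs are arranged. The obstruction you need is not a counting overflow inside a tree but a \emph{structural} property of the instance---that it has no bounded-degree spanning subgraph, equivalently that its minimum vertex cover is $o(n)$---and that is precisely the content of the KMW construction. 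Rebuilding that from scratch could in principle work, but it is far more than the theorem needs when the reduction is available, and your outlined ``mass grows with depth'' mechanism does not supply it.
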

\begin{proof}
The proof is by contradiction.
In order to prove our lower bound we invoke any distributed algorithm $A$ for $O(1)$-backup placement with running time $k = o(\sqrt{ \log n / \log \log n})$  on the Kuhn-Moscibroda-Wattenhofer \cite{kuhn2004cannot} graph $G$. This graph has various helpful properties, but we are interested in two specific properties that are crucial  to our proof.  (1) The girth of the graph is at least 2k + 2, and (2) the minimum degree in the graph is at least $2$. The first property means that the view of a vertex $v$ with respect to radius $k$ (i.e., the subgraph of $G$ induced by all vertices at distance at most $k$ from $v$) is a tree. The second property implies that in such a view, all leafs are at distance $k$ from $v$, but not closer to $v$. This is because the degrees are preserved in the tree, except the very last layer at distance $k$ from $v$, in which the degrees become equal to $1$. We denote this tree by $T$. The vertex $v$ is considered to be the root of $T$.

Next, consider a tree $T'$ obtained by adding a single child to each vertex at distance $k$ from the root in $T$. These new vertices constitute the set of leafs of $T'$, and they all are at distance $k + 1$ from the root of $T'$. Within distance $k$ from the root the trees $T$ and $T'$ are indistinguishable. Since the optimal backup placement on $T'$ has load $2$, the Algorithm $A$ invoked on either $T'$, $T$ or $G$ must assign at most $O(1)$ backups to $v$. This is because $T'$, $T$ and $G$ are indistinguishable to $A$ that is executed for $k =  o(\sqrt{ \log n / \log \log n})$ rounds. Observe that this is true for all vertices $v$ in $G$, once $A$ terminates on $G$. In other words, $A$ computes an $O(1)$-backup placement of $G$.

Now consider the subgraph $G' = (V,E')$ of $G$, where $E'$ is the set of all edges $(u,w)$, such that either $u$ selected $w$ for a backup, $w$ selected $u$, or both. Since each vertex selects just a single neighbor for a backup, and each vertex is selected by $O(1)$ neighbors, the maximum degree of $G'$ is $O(1)$. However, it contains all vertices of $G$. Thus $G'$ is a constant approximation to a minimum vertex cover of $G$. (Take a minimum vertex cover of $G$ and add all neighbors in $G'$ of its vertices. The result is $V(G') = V(G)$, and the size is greater by a multiplicative factor of $O(1)$.) But computing such an approximation requires $\Omega(\sqrt{\frac{\log n}{\log \log n}})$ time on $G$ \cite{kuhn2004cannot}.

Note: it may be possible that such a graph $G'$ that covers $G$ and has maximum degree $O(1)$ does not exist at all. In this case the assumption of existence of an algorithm $A$ with running time $o(\sqrt{ \log n / \log \log n})$  leads to a contradiction (the existence of such a graph). 
\end{proof}
\noindent \ \ Note that the lower bound of \cite{kuhn2004cannot} that we employed
applies not only to a constant-factor approximation, but also to a polylogarithmic one. Consequently, our arguments are directly extended to the next result.
\begin{corollary}
Backup placement with polylogarithmic approximation factor requires $\Omega(\sqrt {\frac{\log n}{\log \log n}})$ rounds, for any deterministic algorithm.
\end{corollary}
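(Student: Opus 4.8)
The plan is to reuse the proof of the preceding theorem essentially verbatim, carrying the approximation factor through each step as a single polylogarithmic quantity rather than a constant, and to close the argument with the polylogarithmic-approximation version of the Kuhn--Moscibroda--Wattenhofer (KMW) lower bound rather than its constant-factor version. Since the trees $T$ and $T'$ and the graph $G$ in the theorem were never tailored to the constant $O(1)$, the only changes needed are in the bookkeeping of the factor.

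First I would suppose for contradiction that some deterministic algorithm $A$ computes an $O(\log^c n)$-approximation to backup placement, for a fixed constant $c \geq 0$, in time $k = o(\sqrt{\log n / \log \log n})$, and run it on the KMW graph $G$. The indistinguishability argument of the theorem applies without change: within radius $k$ the view of every vertex $v$ is a tree identical in $T$ and in $T'$, and because the optimal load on $T'$ is $2$, an $O(\log^c n)$-approximation forces $A$ to assign at most $O(\log^c n)$ backups to each vertex $v$ of $G$.

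Next I would form the selection subgraph $G' = (V,E')$ exactly as in the theorem. Each vertex selects a single neighbor and is selected by at most $O(\log^c n)$ neighbors, so $G'$ spans all of $V(G)$ and has maximum degree $\Delta(G') = O(\log^c n)$. Propagating the theorem's counting argument with this degree bound, any minimum vertex cover $C$ of $G$ still satisfies $C \cup N_{G'}(C) = V(G)$ (every vertex is non-isolated in $G' \subseteq G$, so lies in $C$ or is a $G'$-neighbor of $C$), whence $n \leq (1 + \Delta(G'))\,|C|$ and $|C| \geq n / O(\log^c n)$. Thus the trivial cover $V(G)$, output in $O(1)$ further rounds, is an $O(\log^c n)$-approximate minimum vertex cover obtained in total time $k + O(1) = o(\sqrt{\log n / \log \log n})$.

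Finally I would invoke the remark preceding the corollary: the KMW lower bound of \cite{kuhn2004cannot} forbids not only constant-factor but also polylogarithmic-factor approximation of minimum vertex cover within $o(\sqrt{\log n / \log \log n})$ rounds, which contradicts the composite algorithm just constructed. The hard part will be purely the bookkeeping: I must verify that the approximation ratio degrades by exactly one $\mathrm{polylog}(n)$ factor and no worse (in particular that the per-vertex backup bound, the degree of $G'$, and the vertex-cover blow-up each contribute only this single factor), and that the cited lower bound is genuinely stated for polylogarithmic approximation. Granting the latter, which the paper's remark asserts, the argument carries over with its structure unchanged.
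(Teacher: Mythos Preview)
Your proposal is correct and follows essentially the same approach as the paper: the paper's justification of the corollary is precisely the one-line observation that the KMW lower bound of \cite{kuhn2004cannot} applies equally to polylogarithmic approximation, so the proof of the preceding theorem carries over verbatim with the approximation factor replaced by $O(\log^c n)$. Your write-up spells this extension out explicitly, tracking the single polylogarithmic factor through the load bound, the degree of $G'$, and the vertex-cover blow-up, exactly as the paper intends.
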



\bibliographystyle{plain} 
\bibliography{bibliography} 

\begin{thebibliography}{10}

\bibitem{abu2013data}
Mervat Abu-Elkheir, Mohammad Hayajneh, and Najah Ali.
\newblock Data management for the internet of things: Design primitives and
  solution.
\newblock {\em Sensors}, 13(11):15582--15612, 2013.

\bibitem{aggarwal2013internet}
Charu~C Aggarwal, Naveen Ashish, and Amit Sheth.
\newblock The internet of things: A survey from the data-centric perspective.
\newblock In {\em Managing and mining sensor data}, pages 383--428. Springer,
  2013.

\bibitem{barenboim2012locality}
Leonid Barenboim.
\newblock On the locality of some np-complete problems.
\newblock In {\em International Colloquium on Automata, Languages, and
  Programming}, pages 403--415. Springer, 2012.

\bibitem{barenboim2010sublogarithmic}
Leonid Barenboim and Michael Elkin.
\newblock Sublogarithmic distributed mis algorithm for sparse graphs using
  nash-williams decomposition.
\newblock {\em Distributed Computing}, 22(5-6):363--379, 2010.

\bibitem{barenboim2013distributed}
Leonid Barenboim and Michael Elkin.
\newblock Distributed graph coloring: Fundamentals and recent developments.
\newblock {\em Synthesis Lectures on Distributed Computing Theory},
  4(1):1--171, 2013.

\bibitem{barenboim2018fast}
Leonid Barenboim, Michael Elkin, and Cyril Gavoille.
\newblock A fast network-decomposition algorithm and its applications to
  constant-time distributed computation.
\newblock {\em Theoretical Computer Science}, 751:2--23, 2018.

\bibitem{barenboim2017deterministic}
Leonid Barenboim, Michael Elkin, and Tzalik Maimon.
\newblock Deterministic distributed ($\mbox{Delta}$ +
  $o(\mbox{Delta})$)-edge-coloring, and vertex-coloring of graphs with bounded
  diversity.
\newblock In {\em Proceedings of the ACM Symposium on Principles of Distributed
  Computing}, pages 175--184. ACM, 2017.

\bibitem{barenboim2018distributed}
Leonid Barenboim and Tzalik Maimon.
\newblock Distributed symmetry breaking in graphs with bounded diversity.
\newblock In {\em 2018 IEEE International Parallel and Distributed Processing
  Symposium (IPDPS)}, pages 723--732. IEEE, 2018.

\bibitem{edsger1974dijkstra}
W~Edsger.
\newblock Dijkstra.
\newblock {\em Self-stabilizing systems in spite of distributed control.
  Commum. ACM}, 17(11):643--644, 1974.

\bibitem{fan2010scheme}
Tongrang Fan and Yanzhao Chen.
\newblock A scheme of data management in the internet of things.
\newblock In {\em Network Infrastructure and Digital Content, 2010 2nd IEEE
  International Conference on}, pages 110--114. IEEE, 2010.

\bibitem{fischer2017improved}
Manuela Fischer.
\newblock Improved deterministic distributed matching via rounding.
\newblock In {\em 31st International Symposium on Distributed Computing (DISC
  2017)}, volume~91, page~17. Schloss Dagstuhl-Leibniz-Zentrum fuer Informatik,
  2017.

\bibitem{halldorsson2015bp}
Magn{\'u}s~M Halld{\'o}rsson, Sven K{\"o}hler, Boaz Patt-Shamir, and Dror
  Rawitz.
\newblock Distributed backup placement in networks.
\newblock In {\em Proceedings of the 27th ACM symposium on Parallelism in
  Algorithms and Architectures}, pages 274--283. ACM, 2015.

\bibitem{halldorsson2018distributed}
Magn{\'u}s~M Halld{\'o}rsson, Sven K{\"o}hler, Boaz Patt-Shamir, and Dror
  Rawitz.
\newblock Distributed backup placement in networks.
\newblock {\em Distributed Computing}, 31(2):83--98, 2018.

\bibitem{halldorsson2015distributed}
Magn{\'u}s~M Halld{\'o}rsson, Sven K{\"o}hler, and Dror Rawitz.
\newblock Distributed approximation of k-service assignment.
\newblock {\em Distributed Computing}, 32(1):27--40, 2019.

\bibitem{kakamanshadi2015survey}
Gholamreza Kakamanshadi, Savita Gupta, and Sukhwinder Singh.
\newblock A survey on fault tolerance techniques in wireless sensor networks.
\newblock In {\em Green Computing and Internet of Things (ICGCIoT), 2015
  International Conference on}, pages 168--173. IEEE, 2015.

\bibitem{kuhn2009weak}
Fabian Kuhn.
\newblock Weak graph colorings: distributed algorithms and applications.
\newblock In {\em Proceedings of the twenty-first annual symposium on
  Parallelism in algorithms and architectures}, pages 138--144. ACM, 2009.

\bibitem{kuhn2004cannot}
Fabian Kuhn, Thomas Moscibroda, and Rogert Wattenhofer.
\newblock What cannot be computed locally!
\newblock In {\em Proceedings of the twenty-third annual ACM symposium on
  Principles of distributed computing}, pages 300--309. ACM, 2004.

\bibitem{lenzen2008can}
Christoph Lenzen, Yvonne~Anne Oswald, and Roger Wattenhofer.
\newblock What can be approximated locally?: case study: dominating sets in
  planar graphs.
\newblock In {\em Proceedings of the twentieth annual symposium on Parallelism
  in algorithms and architectures}, pages 46--54. ACM, 2008.

\bibitem{linial1992locality}
Nathan Linial.
\newblock Locality in distributed graph algorithms.
\newblock {\em SIAM Journal on Computing}, 21(1):193--201, 1992.

\bibitem{naor1995can}
Moni Naor and Larry Stockmeyer.
\newblock What can be computed locally?
\newblock {\em SIAM Journal on Computing}, 24(6):1259--1277, 1995.

\bibitem{nash1961edge}
C~St~JA Nash-Williams.
\newblock Edge-disjoint spanning trees of finite graphs.
\newblock {\em Journal of the London Mathematical Society}, 1(1):445--450,
  1961.

\bibitem{nash1964decomposition}
C~St~JA Nash-Williams.
\newblock Decomposition of finite graphs into forests.
\newblock {\em Journal of the London Mathematical Society}, 1(1):12--12, 1964.

\bibitem{oren2018distributed}
Gal Oren, Leonid Barenboim, and Harel Levin.
\newblock Distributed fault-tolerant backup-placement in overloaded wireless
  sensor networks.
\newblock In {\em International Conference on Broadband Communications,
  Networks and Systems}, pages 212--224. Springer, 2018.

\bibitem{panconesi2001some}
Alessandro Panconesi and Romeo Rizzi.
\newblock Some simple distributed algorithms for sparse networks.
\newblock {\em Distributed computing}, 14(2):97--100, 2001.

\bibitem{panconesi1996complexity}
Alessandro Panconesi and Aravind Srinivasan.
\newblock On the complexity of distributed network decomposition.
\newblock {\em Journal of Algorithms}, 20(2):356--374, 1996.

\bibitem{qin2016things}
Yongrui Qin, Quan~Z Sheng, Nickolas~JG Falkner, Schahram Dustdar, Hua Wang, and
  Athanasios~V Vasilakos.
\newblock When things matter: A survey on data-centric internet of things.
\newblock {\em Journal of Network and Computer Applications}, 64:137--153,
  2016.

\bibitem{schneider2008log}
Johannes Schneider and Roger Wattenhofer.
\newblock A log-star distributed maximal independent set algorithm for
  growth-bounded graphs.
\newblock In {\em Proceedings of the twenty-seventh ACM symposium on Principles
  of distributed computing}, pages 35--44. ACM, 2008.

\end{thebibliography}

\newpage

\begin{appendices}

\section{Figures}

\vspace{-1cm}


\begin{figure}[H]
\centering
\begin{tikzpicture}

  \tikzstyle{vertex}=[circle,fill=black!25,minimum size=12pt,inner sep=2pt]
  
  \node[vertex] (G_1) at (1,0) {1};
  \node[vertex] (G_2) at (-1,0) {2};
  \node[vertex] (G_3) at (1,-2) {3};
  \node[vertex] (G_4) at (-1,-2) {4};
  \node[vertex] (G_5) at (-2.5,-1) {5};
  \node[vertex] (G_6) at (2.5,-1) {6};
  
  \draw [line width=0.2mm, black] (G_1) -- (G_2); 
  \draw [line width=0.2mm, black] (G_2) -- (G_5);
  \draw [line width=0.2mm, black] (G_5) -- (G_4);
  \draw [line width=0.2mm, black] (G_4) -- (G_3);
  \draw [line width=0.2mm, black] (G_3) -- (G_6);
  \draw [line width=0.2mm, black] (G_6) -- (G_1);

  \draw [->, line width=0.2mm, blue] [dashed] (G_1) to[out=135,in=45] (G_2);
  \draw [->, line width=0.2mm, blue] [dashed] (G_2) to[out=180,in=90] (G_5);
  \draw [->, line width=0.2mm, blue] [dashed] (G_5) to[out=270,in=180] (G_4);
  \draw [->, line width=0.2mm, blue] [dashed] (G_4) to[out=315,in=225] (G_3);
  \draw [->, line width=0.2mm, blue] [dashed] (G_3) to[out=0,in=270] (G_6);
  \draw [->, line width=0.2mm, blue] [dashed] (G_6) to[out=90,in=0] (G_1);

\end{tikzpicture}
\hspace{1cm}
\begin{tikzpicture}

  \tikzstyle{vertex}=[circle,fill=black!25,minimum size=12pt,inner sep=2pt]
  
  \node[vertex] (G_1) at (0,0) {1};
  
  \node[vertex] (G_2) at (-2,-1) {2};
  \node[vertex] (G_3) at (2,-1) {3};
  
  \node[vertex] (G_6) at (-1,-2) {6};
  \node[vertex] (G_7) at (1,-2) {7};
  \node[vertex] (G_61) at (-2,-2) {5};
  \node[vertex] (G_71) at (2,-2) {8};
  \node[vertex] (G_62) at (-3,-2) {4};
  \node[vertex] (G_72) at (3,-2) {9};

  \draw [-, line width=0.3mm, black] (G_1) -- (G_2); 
  \draw [-, line width=0.3mm, black] (G_1) -- (G_3);
  
  \draw [-, line width=0.3mm, black] (G_2) -- (G_6);
  \draw [-, line width=0.3mm, black] (G_3) -- (G_7);
  \draw [-, line width=0.3mm, black] (G_2) -- (G_61);
  \draw [-, line width=0.3mm, black] (G_3) -- (G_71);
  \draw [-, line width=0.3mm, black] (G_2) -- (G_62);
  \draw [-, line width=0.3mm, black] (G_3) -- (G_72);
  
\draw [->, line width=0.3mm, blue] [dashed] (G_1) to[out=270,in=0] (G_2);

\draw [->, line width=0.3mm, blue] [dashed] (G_2) to[out=90,in=180] (G_1);
\draw [->, line width=0.3mm, blue] [dashed] (G_3) to[out=90,in=0] (G_1);

\draw [->, line width=0.3mm, blue] [dashed] (G_6) to[out=0,in=0] (G_2);
\draw [->, line width=0.3mm, blue] [dashed] (G_62) to[out=180,in=180] (G_2);

\draw [->, line width=0.3mm, blue] [dashed] (G_72) to[out=0,in=0] (G_3);
\draw [->, line width=0.3mm, blue] [dashed] (G_7) to[out=180,in=180] (G_3);

\draw [->, line width=0.3mm, blue] [dashed] (G_71) to[out=0,in=280] (G_3);
\draw [->, line width=0.3mm, blue] [dashed] (G_61) to[out=180,in=260] (G_2);

\end{tikzpicture}
\vspace{-0.3cm}
\caption{Optimal backup placement (in blue)  in $C_6$ cycle graph (left) and a rooted tree (right)} 
\label{fig1}
\end{figure}

\vspace{-1.5cm}

\begin{figure}[H]
    \centering
    \begin{minipage}{0.45\textwidth}
        \centering

\begin{tikzpicture}
  \tikzstyle{vertex}=[circle,fill=black!25,minimum size=12pt,inner sep=2pt]
  \node[vertex] (G_1) at (0,0) {0};
  \node[vertex] (G_2) at (-2,-2) {1};
  \node         (G_3) at (0,-2) {\ldots};
  \node[vertex] (G_4) at (2,-2) {n};
  \draw [->, line width=0.3mm, black] (G_1) -- (G_2); 
  \draw [->, line width=0.3mm, blue] [dashed] (G_1) -- (G_3);
  \draw [->, line width=0.3mm, black] (G_1) -- (G_4);
  \draw [->, line width=0.3mm, blue] [dashed] (G_2) to[out=180,in=90] (G_1); 
  \draw [->, line width=0.3mm, blue] [dashed] (G_4) to[out=0,in=90] (G_1);
  \draw [->, line width=0.3mm, blue] [dashed] (G_3) to[out=0,in=0] (G_1);
  \draw [->, line width=0.3mm, blue] [dashed] (G_3) to[out=180,in=180] (G_1);
  
\end{tikzpicture}
\vspace{-0.3cm}
\caption{Tree Backup Placement}
\label{fig2}
\end{minipage}\hfill
    \begin{minipage}{0.5\textwidth}
        \centering
\begin{tikzpicture}
  \tikzstyle{vertex}=[circle,fill=black!25,minimum size=12pt,inner sep=2pt]
  \node[vertex] (G_1) at (0,0) {0};
  \node[vertex] (G_2) at (-2,-2) {1};
  \node         (G_3) at (0,-2) {\ldots};
  \node[vertex] (G_4) at (2,-2) {n};
  \draw [->, line width=0.3mm, black] (G_1) -- (G_2); 
  \draw [->, line width=0.3mm, blue] [dashed] (G_1) -- (G_3);
  \draw [->, line width=0.3mm, black] (G_1) -- (G_4);
  \draw [->, line width=0.3mm, blue] [dashed] (G_2) to[out=180,in=90] (G_1); 
  \draw [->, line width=0.3mm, blue] [dashed] (G_3) -- (G_2);
  \draw [->, line width=0.3mm, blue] [dashed] (G_4) -- (G_3);
\end{tikzpicture}
\vspace{-0.3cm}
        \caption{General Backup Placement}
        \label{fig3}
    \end{minipage}
\end{figure}

\vspace{-0.6cm}

\begin{figure}[htb]
    \centering
\begin{minipage}[b]{.3\linewidth}
\centering
        \begin{tikzpicture}[
    node distance = 7mm and 21mm,
      start chain = going below,
         V/.style = {circle, draw, 
                     fill=#1, 
                     inner sep=0pt, minimum size=3mm,
                     node contents={}},
 every fit/.style = {ellipse, draw=#1, inner ysep=-1mm, 
                     inner xsep=5mm},
                    ]
\foreach \i in {4,...,0} 
{
    \node (n1\i) [V=myblue,on chain,
                  label={[text=myblue]left:$u_{\i}$}];
}

\foreach \i in {2,...,0} 
{
    \node (n2\i) [V=mygreen, above right=3.5mm and 22mm of n1\i,
                  label={[text=mygreen]right:$v_{\i}$}];
}
\node [myblue,fit=(n14) (n10),label=above:$U$] {};
\node [mygreen,fit=(n22) (n20),label=above:$V$] {};
\node [myblue,fit=(n14) (n10),label=below:$m$] {};
\node [mygreen,fit=(n22) (n20),label=below:$n$] {};


\draw [->, line width=0.3mm, black] [] (n10) to (n20);
\draw [->, line width=0.3mm, black] [] (n10) to (n21);

\draw [->, line width=0.3mm, black] [] (n11) to (n20);
\draw [->, line width=0.3mm, black] [] (n11) to (n21);

\draw [->, line width=0.3mm, black] [] (n12) to (n20);
\draw [->, line width=0.3mm, black] [] (n12) to (n21);

\draw [->, line width=0.3mm, black] [] (n13) to (n20);

\draw [->, line width=0.3mm, black] [] (n14) to (n20);
\draw [->, line width=0.3mm, black] [] (n14) to (n22);

    \end{tikzpicture}
    \captionof{figure}[One figure]{Round 1.}
\end{minipage}
\hfill
\begin{minipage}[b]{.3\linewidth}
\centering
        \begin{tikzpicture}[
    node distance = 7mm and 21mm,
      start chain = going below,
         V/.style = {circle, draw, 
                     fill=#1, 
                     inner sep=0pt, minimum size=3mm,
                     node contents={}},
 every fit/.style = {ellipse, draw=#1, inner ysep=-1mm, 
                     inner xsep=5mm},
                    ]
\foreach \i in {4,...,0} 
{
    \node (n1\i) [V=myblue,on chain,
                  label={[text=myblue]left:$u_{\i}$}];
}
                  
\foreach \i in {2,...,0} 
{
    \node (n2\i) [V=mygreen, above right=3.5mm and 22mm of n1\i,
                  label={[text=mygreen]right:$v_{\i}$}];
}

                  
    \node (n21) [V=myempty, above right=3.5mm and 22mm of n11,
                  label={[text=mygreen]right:$v_{1}$}];
    \node (n22) [V=myempty, above right=3.5mm and 22mm of n12,
                  label={[text=mygreen]right:$v_{2}$}];

\node [myblue,fit=(n14) (n10),label=above:$U$] {};
\node [mygreen,fit=(n22) (n20),label=above:$V$] {};
\node [myblue,fit=(n14) (n10),label=below:$m$] {};
\node [mygreen,fit=(n22) (n20),label=below:$n$] {};

\draw [->, line width=0.3mm, blue] [dashed] (n14) to (n22);

\draw [->, line width=0.3mm, black] [] (n13) to (n20);


\draw [->, line width=0.3mm, blue] [dashed] (n10) to (n21);

\draw [->, line width=0.3mm, blue] [dashed] (n11) to (n21);

\draw [->, line width=0.3mm, blue] [dashed] (n12) to (n21);

    \end{tikzpicture}
  \captionof{figure}[One figure]{Round 2.}
\end{minipage}
\hfill
\begin{minipage}[b]{.3\linewidth}
\centering
        \begin{tikzpicture}[
    node distance = 7mm and 21mm,
      start chain = going below,
         V/.style = {circle, draw, 
                     fill=#1, 
                     inner sep=0pt, minimum size=3mm,
                     node contents={}},
 every fit/.style = {ellipse, draw=#1, inner ysep=-1mm, 
                     inner xsep=5mm},
                    ]
\foreach \i in {4,...,0} 
{
    \node (n1\i) [V=myblue,on chain,
                  label={[text=myblue]left:$u_{\i}$}];
}

\foreach \i in {2,...,0} 
{
    \node (n2\i) [V=myempty, above right=3.5mm and 22mm of n1\i,
                  label={[text=mygreen]right:$v_{\i}$}];
}


\node [myblue,fit=(n14) (n10),label=above:$U$] {};
\node [mygreen,fit=(n22) (n20),label=above:$V$] {};
\node [myblue,fit=(n14) (n10),label=below:$m$] {};
\node [mygreen,fit=(n22) (n20),label=below:$n$] {};

\draw [->, line width=0.3mm, blue] [dashed] (n14) to (n22);

\draw [->, line width=0.3mm, blue] [dashed] (n13) to (n20);


\draw [->, line width=0.3mm, blue] [dashed] (n10) to (n21);

\draw [->, line width=0.3mm, blue] [dashed] (n11) to (n21);

\draw [->, line width=0.3mm, blue] [dashed] (n12) to (n21);

    \end{tikzpicture}
  \captionof{figure}[One figure]{Round 3.}
\end{minipage}
\vspace{-0.2cm}
\caption{The Bipartite Graph Distributed Backup Placement Execution: An example of a bipartite graph, with $a = 2$ and an estimation $t' = 1$ for $t$, and we allow up to $2 \cdot a t'=4$ backups per $v \in V$. Black arrows represent edges of the input graph, while blue ones represent selections of backup placement.\vspace{1cm}}
\label{fig4}
\end{figure}

\vspace{-1.5cm}

\begin{figure}[H]
\centering
\begin{minipage}[b]{.5\linewidth}
\begin{tikzpicture}
  \tikzstyle{vertex}=[circle,fill=black!25,minimum size=12pt,inner sep=2pt]
  \node[vertex] (G_1) at (0,0) {};
  \node[vertex] (G_2) at (-1,-1) {};
  \node[vertex] (G_3) at (1,-1) {};
  \node[vertex] (G_4) at (-3,-2) {};
  \node[vertex] (G_5) at (3,-2) {};
  \node[vertex] (G_6) at (-1,-2) {};
  \node[vertex] (G_7) at (1,-2) {};
  
  \node[vertex] (G_8) at (-3.5,-3) {};
  \node[vertex] (G_9) at (-2.5,-3) {};
  \node[vertex] (G_10) at (-1.5,-3) {};
  \node[vertex] (G_11) at (-0.5,-3) {};
  \node[vertex] (G_12) at (0.5,-3) {};
  \node[vertex] (G_13) at (1.5,-3) {};
  \node[vertex] (G_14) at (2.5,-3) {};
  \node[vertex] (G_15) at (3.5,-3) {};

  \draw [-, line width=0.3mm, black] (G_1) -- (G_2); 
  \draw [-, line width=0.3mm, black] (G_1) -- (G_3);
  \draw [-, line width=0.3mm, black] (G_2) -- (G_4);
  \draw [-, line width=0.3mm, black] (G_2) -- (G_6);
  \draw [-, line width=0.3mm, black] (G_3) -- (G_5);
  \draw [-, line width=0.3mm, black] (G_3) -- (G_7);
  
  \draw [-, line width=0.3mm, black] (G_4) -- (G_8); 
  \draw [-, line width=0.3mm, black] (G_4) -- (G_9); 
  \draw [-, line width=0.3mm, black] (G_6) -- (G_10); 
  \draw [-, line width=0.3mm, black] (G_6) -- (G_11); 
  \draw [-, line width=0.3mm, black] (G_7) -- (G_12); 
  \draw [-, line width=0.3mm, black] (G_7) -- (G_13); 
  \draw [-, line width=0.3mm, black] (G_5) -- (G_14); 
  \draw [-, line width=0.3mm, black] (G_5) -- (G_15); 
  
\draw [->, line width=0.3mm, blue] [dashed] (G_1) to[out=270,in=0] (G_2);
  
\draw [->, line width=0.3mm, blue] [dashed] (G_3) to[out=0,in=0] (G_1);
\draw [->, line width=0.3mm, blue] [dashed] (G_2) to[out=180,in=180] (G_1);

\draw [->, line width=0.3mm, blue] [dashed] (G_4) to[out=270,in=0] (G_8);
\draw [->, line width=0.3mm, blue] [dashed] (G_6) to[out=270,in=0] (G_10);
\draw [->, line width=0.3mm, blue] [dashed] (G_7) to[out=270,in=180] (G_13);
\draw [->, line width=0.3mm, blue] [dashed] (G_5) to[out=270,in=180] (G_15);

\draw [->, line width=0.3mm, blue] [dashed] (G_8) to[out=180,in=180] (G_4);
\draw [->, line width=0.3mm, blue] [dashed] (G_9) to[out=0,in=0] (G_4);
\draw [->, line width=0.3mm, blue] [dashed] (G_10) to[out=180,in=180] (G_6); 
\draw [->, line width=0.3mm, blue] [dashed] (G_11) to[out=0,in=0] (G_6); 
\draw [->, line width=0.3mm, blue] [dashed] (G_12) to[out=180,in=180] (G_7); 
\draw [->, line width=0.3mm, blue] [dashed] (G_13) to[out=0,in=0] (G_7); 
\draw [->, line width=0.3mm, blue] [dashed] (G_14) to[out=180,in=180] (G_5); 
\draw [->, line width=0.3mm, blue] [dashed] (G_15) to[out=0,in=0] (G_5); 

\end{tikzpicture}
\end{minipage}
\hspace{1cm}
\begin{minipage}[b]{.4\linewidth}
\begin{tikzpicture}
  \tikzstyle{vertex}=[circle,fill=black!25,minimum size=12pt,inner sep=2pt]
  
  \node[vertex] (G_1) at (0,0) {};
  
  \node[vertex] (G_2) at (-1,-1) {};
  \node[vertex] (G_3) at (1,-1) {};
  
  \node[vertex] (G_4) at (-2.5,-2) {};
  \node[vertex] (G_5) at (2.5,-2) {};
  \node[vertex] (G_6) at (-1,-2) {};
  \node[vertex] (G_7) at (1,-2) {};
  
  \node[vertex] (G_8) at (-2.5,-3) {};
  \node[vertex] (G_10) at (-1,-3) {};
  \node[vertex] (G_13) at (1,-3) {};
  \node[vertex] (G_15) at (2.5,-3) {};

  \draw [-, line width=0.3mm, black] (G_1) -- (G_2); 
  \draw [-, line width=0.3mm, black] (G_1) -- (G_3);
  \draw [-, line width=0.3mm, black] (G_2) -- (G_4);
  \draw [-, line width=0.3mm, black] (G_2) -- (G_6);
  \draw [-, line width=0.3mm, black] (G_3) -- (G_5);
  \draw [-, line width=0.3mm, black] (G_3) -- (G_7);
  
  \draw [-, line width=0.3mm, black] (G_4) -- (G_8); 
  \draw [-, line width=0.3mm, black] (G_6) -- (G_10); 
  \draw [-, line width=0.3mm, black] (G_7) -- (G_13); 
  \draw [-, line width=0.3mm, black] (G_5) -- (G_15); 
  
\draw [->, line width=0.3mm, blue] [dashed] (G_1) to[out=270,in=0] (G_2);
  
\draw [->, line width=0.3mm, blue] [dashed] (G_3) to[out=0,in=90] (G_5);
\draw [->, line width=0.3mm, blue] [dashed] (G_2) to[out=180,in=90] (G_4);

\draw [->, line width=0.3mm, blue] [dashed] (G_4) to[out=0,in=0] (G_8);
\draw [->, line width=0.3mm, blue] [dashed] (G_6) to[out=0,in=0] (G_10);
\draw [->, line width=0.3mm, blue] [dashed] (G_7) to[out=180,in=180] (G_13);
\draw [->, line width=0.3mm, blue] [dashed] (G_5) to[out=180,in=180] (G_15);
\draw [->, line width=0.3mm, blue] [dashed] (G_8) to[out=180,in=180] (G_4);
\draw [->, line width=0.3mm, blue] [dashed] (G_10) to[out=180,in=180] (G_6); 
\draw [->, line width=0.3mm, blue] [dashed] (G_13) to[out=0,in=0] (G_7); 
\draw [->, line width=0.3mm, blue] [dashed] (G_15) to[out=0,in=0] (G_5); 

\end{tikzpicture}
\end{minipage}
\vspace{-0.2cm}
\caption{Optimal backup placement in fully balanced (left) and "single leafs" (right) trees.} 
\label{fig7}
\end{figure}
\vspace{-0.8cm}
\clearpage

\newpage
\section{Proofs}

\subsection{Lemma 2.2 proof: Backup Placement in Forest Cover of Dense Graphs}
\label{appendix:forest}
\begin{proof}
As a first step, we need to compute a forest cover of any input graph $G$, such that the trees of the forest are {\em vertex-disjoint}. This can be done by extending the algorithm of Panconesi and Rizzi for $\Delta$-forests-decomposition \cite{panconesi2001some}. (Where $\Delta$ is the maximum degree in the input graph.) The latter algorithm produces edge-disjoint trees, that are not necessarily vertex-disjoint. Specifically, it partitions the edges of any graph into $\Delta$ acyclic subgraphs, such that each vertex has at most one parent in each subgraph. In other words, each of the $\Delta$ subgraphs is a forest in which vertices know their parents. The algorithm of Panconesi and Rizzi requires just a single round on any graph. An important property of the algorithm of \cite{panconesi2001some} we employ is that for any pair of neighbors in the input graph $G$, the higher-$ID$ neighbor is the parent of the lower-$ID$ neighbor in some forest of the output.

 Consider a subgraph $G = (V,E'), E' \subseteq E$, obtained as follows.
 First, invoke the $\Delta$-forest-decompo\-sition algorithm of \cite{panconesi2001some}, and initialize $E' = \emptyset$. Next, each vertex $v \in V$ that has at least one parent in at least one forest, selects an arbitrary parent $u$, and adds $(u,v)$ to $E'$. Note that the number of edges added this way is bounded by $|V| = n$, and the resulting graph $G'$ is acyclic. Indeed, a cycle would imply that either a vertex has selected two parents, or the cycle is consistently oriented, i.e., a vertex is an ancestor of itself. But each vertex selects at most one parent, and each parent has a higher $ID$ than that of its child, thus a vertex cannot be its own ancestor. Thus there are no cycles in $G'$. Note that the forest $G'$ is not necessarily a forest cover in this stage. This is because some vertices that are roots in all forests of Panconesi-Rizzi's execution may belong to none of the edges in $E'$. This can be corrected as follows. Each such root vertex $v$, that has not been selected by any of its children in the previous stage, selects an arbitrary neighbor $u$ and adds $(u,v)$ to $E'$. Note that such root vertices form an independent set, i.e., they cannot be neighbors in $G$. (Otherwise, one of them would be the parent of the other in some forest of Panconesi-Rizzi's execution.) Consequently, all edges added in this stage do not close cycles, since each such edge has one endpoint of degree $1$ in the resulting graph $G' = (V,E')$. Moreover, in this stage each vertex of $V$ belongs to at least one edge of $E'$, thus $G'$ is a forest cover. In addition, each vertex belongs to exactly one tree, thus the trees are vertex-disjoint. 
 \end{proof}



\subsection{Lemma 2.4 proof}
\label{o1}
\begin{proof}
The algorithm consists of two parts, each of which has a constant running time: (1) distributed selection of a backup placement node, in which each non-leaf selects one of its children, and (2) a distributed selection of a backup placement node, in which each leaf selects one of its neighbors. Therefore, the total running time is $O(1)$.
\end{proof}

\subsection{Lemma 2.5 proof}
\label{logn}
\begin{proof}
Initially, all vertices have unique $ID$s of size $O(\log n)$ bits each.
Computing a forest cover requires each vertex to know who of its neighbors have greater $ID$s than its own, and whether some neighbor has selected the vertex. This requires $O(\log n)$ bits per edge. Once the forest cover has formed,
each vertex can send its own $ID$ and its parent $ID$ to all of its neighbors. By that, each vertex learns if a neighbor is also one of its siblings in a tree $T$. (They are siblings if they have the same parent.) Also, using this technique each vertex can identify the sibling of a smaller $ID$ that is closest to its own.

This information is sufficient to select a neighbor for a backup placement by our algorithm. Sending two $ID$s over an edge requires $O(\log n)$ bits. Thus, the algorithm can be implemented with messages of size $O(\log n)$ per link per round.
\end{proof}

\subsection{UDG, BDG, and Line Graphs Neighborhood Independence}
\label{appendix:udg-bdg}


\begin{figure}[htb]
    \centering
    \begin{minipage}{0.45\textwidth}
        \centering

\usetikzlibrary{intersections, calc, fpu, decorations.pathreplacing}

\newcommand{\TikZFractionalCake}[5]{
    \pgfmathsetmacro{\angle}{360/#2};%
    \foreach [count=\x] \p in {1, 2, 3, 4, 5}%
    {   \pgfmathsetmacro{\lox}{\x-1}%
        \filldraw[draw=#4,fill=#3] [dashed] (0,0) -- (\angle*\lox:#5) arc (\angle*\lox:\angle*\x:#5) -- cycle;%
        \pgfmathsetmacro{\mix}{\x-0.5}%
        \node[rotate=\mix*\angle] at (\mix*\angle:#5*0.5+.3) {

        \begin{tikzpicture}

  		\tikzstyle{vertex}=[circle,fill=black!25,minimum size=12pt,inner sep=2pt]
        \tikzstyle{vertex2}=[circle,fill=white!25,minimum size=12pt,inner sep=2pt]
  
      \node[vertex] (G_0) at (-7.5,-1) {$v_{\x}$};
      \node[vertex2] (G_1) at (-6,-0.5) {};
      \node[vertex] (G_2) at (-9.6,-1) {$u$};
      \node[vertex2] (G_4) at (-6,-1.5) {};
      
      \draw [->, line width=0.2mm, blue] [dashed] (G_0) to (G_2);
      
		\end{tikzpicture}
        
        };
        \node[rotate=\mix*\angle] at (\mix*\angle:#5+0.3) {\x};
    }

}   

\begin{tikzpicture}
\TikZFractionalCake{5}{5}{white}{black}{3}
\end{tikzpicture}
\caption{Unit Disk Graph model: A root node $u$, which forms 5 different cliques, and $\{v_1, ..., v_5 \}$ nodes which must choose $u$ as their backup placement.}
\label{fig5}
\end{minipage}\hfill
    \begin{minipage}{0.5\textwidth}
        \centering
        
\usetikzlibrary{intersections, calc, fpu, decorations.pathreplacing}

\newcommand{\TikZFractionalCake}[5]{
    \pgfmathsetmacro{\angle}{360/#2};%
    \foreach [count=\x] \p in {1, 2, 3, 4, 5, 6, 7, 8, 9, 10, 11}%
    {   \pgfmathsetmacro{\lox}{\x-1}%
        \filldraw[draw=#4,fill=#3] [dashed] (0,0) -- (\angle*\lox:#5) arc (\angle*\lox:\angle*\x:#5) -- cycle;%
        \pgfmathsetmacro{\mix}{\x-0.5}%
        \node[rotate=\mix*\angle] at (\mix*\angle:#5*0.5+.3) {

        \begin{tikzpicture}

  		\tikzstyle{vertex}=[circle,fill=black!20,minimum size=12pt,inner sep=2pt]
        \tikzstyle{vertex2}=[circle,fill=white!25,minimum size=12pt,inner sep=2pt]
        \tikzstyle{vertex3}=[circle,fill=black!10,minimum size=1pt,inner sep=0.5pt]
  
      \node[vertex] (G_0) at (-7.5,-1) {$v_{k+\x}$};
      \node[vertex2] (G_1) at (-6,-0.5) {};
      \node[vertex] (G_2) at (-9.6,-1) {$u$};
      \node[vertex2] (G_4) at (-6,-1.5) {};
      
      \node[vertex3] (G_5) at (-8.5,-1) {$v_{\x}$};
      
      \draw [->, line width=0.2mm, blue] [dashed] (G_0) -- (G_2) node [pos=0.1, below, sloped] (TextNode) {\ldots};

		\end{tikzpicture}
        
        };
        \node[rotate=\mix*\angle] at (\mix*\angle:#5+0.3) {\x};
    }

}   

\begin{tikzpicture}
\TikZFractionalCake{11}{11}{white}{black}{3}
\end{tikzpicture}
        \caption{Bounded Disk Graph model: A root node $u$, which forms $11\times(\log{\frac{R_{max}}{R_{min}}})$ different cliques, such that $\{v_{1}, ..., v_{11} \}$ ... $\{v_{k+1}, ..., v_{k+11} \}$, $k=\log{\frac{R_{max}}{R_{min}}}$ nodes must choose $u$ as their backup placement.}
        \label{fig6}
    \end{minipage}
\end{figure}

\begin{lemma}
Let $G$ be a unit disk graph, the neighborhood independence of a vertex $u$ is at most $c=5$, i.e., every six neighbors of a vertex have at least one edge.
\end{lemma}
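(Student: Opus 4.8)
The plan is to argue geometrically, turning the independence bound into a statement about angles at $u$. Normalize the common transmission radius of the unit disk graph to $1$, so that two vertices are adjacent exactly when their Euclidean distance is at most $1$, and every neighbor of $u$ lies in the closed unit disk centered at $u$. Suppose toward a contradiction that $u$ has six pairwise non-adjacent neighbors $v_1,\dots,v_6$. First I would record the angular direction of each $v_i$ as seen from $u$; these six directions split the full angle around $u$ into arcs whose measures sum to $360^\circ$, so by pigeonhole some two neighbors $v_i,v_j$ subtend an angle $\angle v_i u v_j \leq 60^\circ$ at $u$.

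The geometric core is then to show that a small angle at $u$ forces $v_i$ and $v_j$ to be close. Writing $a=|uv_i|$ and $b=|uv_j|$ with $a\geq b$ (both at most $1$), and $\theta=\angle v_i u v_j\leq 60^\circ$ so that $\cos\theta\geq \tfrac12$, the law of cosines gives $|v_iv_j|^2 = a^2+b^2-2ab\cos\theta \leq a^2+b^2-ab = a^2 - b(a-b) \leq a^2 \leq 1$, where the penultimate inequality uses $b(a-b)\geq 0$. Hence $|v_iv_j|\leq 1$, so $v_i$ and $v_j$ are adjacent in $G$, contradicting the assumed pairwise non-adjacency. Equivalently, among any six neighbors of $u$ at least one pair is joined by an edge, which is exactly the claim that the neighborhood independence of $u$ is at most $5$.

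The step I expect to be the crux is this distance-from-angle estimate: everything reduces to converting the $60^\circ$ angular bound into the inequality $|v_iv_j|\leq\max(|uv_i|,|uv_j|)$. The computation above handles it cleanly, and the only care needed is the boundary case of an exact $60^\circ$ gap (where the bound still holds, with equality) and the degenerate case of two neighbors lying on a common ray from $u$ (angle $0$, handled identically since then $|v_iv_j|=|a-b|\leq 1$). Finally I would note that the bound $c=5$ is tight: placing five neighbors at the vertices of a regular pentagon inscribed in the unit circle around $u$ yields angular gaps of $72^\circ>60^\circ$ and pairwise distances $2\sin 36^\circ>1$, so those five neighbors are pairwise non-adjacent, matching the configuration of five cliques around $u$ depicted in Figure~\ref{fig5}.
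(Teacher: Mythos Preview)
Your proof is correct and follows essentially the same geometric approach as the paper: both arguments reduce to the fact that two independent neighbors of $u$ must subtend an angle strictly greater than $60^\circ$ at $u$, whence six of them cannot fit around a full $360^\circ$. The only cosmetic difference is that the paper invokes the ``longest side lies opposite the largest angle'' principle (non-adjacency forces $|v_iv_j|>R\geq |uv_i|,|uv_j|$, hence $\angle v_iuv_j>\pi/3$), while you run the contrapositive via an explicit law-of-cosines computation and additionally supply the regular-pentagon construction witnessing tightness.
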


\begin{proof}
Let R be the radius of of the UDG (Figure \ref{fig5}). Let assume the neighborhood independence of a vertex $k$ is $c=6$, with $v_1, ..., v_6$ vertices set in a cyclic fashion around $k$. Because of the known UDG's features, $distance(k,v_i) \leq R$. If $(v_i,v_j) \notin E$, then $distance(v_i, v_j) > R$. Thus, $distance(v_i, v_j) > distance(k,v_i)$ \& $distance(k,v_j)$. Hence, in $\Delta$$kv_iv_j$, $(v_i,v_j)$ is the largest side, and so $\angle$$f$ is the largest angle, which must be $> \frac{\pi}{3}$. Therefore, $\sum_{i=1}^{6}\angle{v_ikv_j} > 6\times\frac{\pi}{3} = 2\pi$, a contradiction.
\end{proof}

\begin{lemma}
Let $G$ be a bounded disk graph, the neighborhood independence of a vertex $u$ is at most $c=11\times(\log{\frac{R_{max}}{R_{min}}})$, i.e., every $12\times(\log{\frac{R_{max}}{R_{min}}})$ neighbors of a vertex have at least one edge.
\end{lemma}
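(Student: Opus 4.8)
The plan is to imitate the unit-disk argument of the preceding lemma, but to first split the neighborhood of $u$ into $O(\log(R_{max}/R_{min}))$ geometric \emph{shells}, chosen so that inside each shell all of $u$'s neighbors sit at comparable distances from $u$ and the single-radius reasoning (law of cosines plus the fact that angles around $u$ sum to at most $2\pi$) can be applied almost verbatim.

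First I would record two metric facts from the bounded-disk model. Any neighbor $v$ of $u$ satisfies $d(u,v) \le R_{max}$, since an edge forces the distance to lie within the transmission ranges, which are at most $R_{max}$. Conversely, if two neighbors $v_i,v_j$ of $u$ are non-adjacent, then $d(v_i,v_j)$ exceeds the relevant transmission range; since every range is at least $R_{min}$, this already gives $d(v_i,v_j) > R_{min}$, and because $v_i,v_j$ are reachable from $u$ (so each range is at least the corresponding distance to $u$) it even gives $d(v_i,v_j) > \min(d(u,v_i),d(u,v_j))$. With these facts in hand I would partition the neighbors by distance into shells $S_\ell = \{v : R_{min}2^{\ell} \le d(u,v) < R_{min}2^{\ell+1}\}$ for $\ell = 0,1,\dots,\lceil \log_2(R_{max}/R_{min})\rceil - 1$, folding any neighbor nearer than $R_{min}$ into the innermost shell. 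This produces $O(\log(R_{max}/R_{min}))$ shells, matching the claimed factor.

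The core step is a \emph{per-shell} bound asserting that an independent set confined to a single shell $S_\ell$ has at most $11$ vertices. Fixing $d = R_{min}2^{\ell}$, every neighbor in $S_\ell$ lies at distance in $[d,2d)$ from $u$. For two independent neighbors $v_i,v_j \in S_\ell$, the non-adjacency fact gives $d(v_i,v_j) > \min(d(u,v_i),d(u,v_j)) \ge d$, so the chord between them exceeds the inner shell radius. Feeding the three side lengths of the triangle $u\,v_i\,v_j$ into the law of cosines, exactly as in the unit-disk proof, yields a strictly positive lower bound on the angle $\angle v_i u v_j$; since the angles subtended at $u$ sum to at most $2\pi$, only boundedly many pairwise-independent neighbors can coexist in the shell, and the factor-$2$ spread of admissible distances pushes this count up from the unit-disk value $5$ to $11$. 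Summing the per-shell bound over all $O(\log(R_{max}/R_{min}))$ shells then gives neighborhood independence at most $11\log(R_{max}/R_{min})$, equivalently that every $12\log(R_{max}/R_{min})$ neighbors of $u$ contain an edge.

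The main obstacle is precisely this per-shell bound, and within it the control of the minimum angle between two independent neighbors whose distances to $u$ may differ by nearly a factor of $2$: unlike the unit-disk case, the chord $d(v_i,v_j)$ need not be the longest side of $u\,v_i\,v_j$, so one cannot immediately declare the angle at $u$ to be the largest. The delicate point will be to check that restricting both endpoints to one shell still forbids the near-collinear configurations that would otherwise permit arbitrarily small angles, and to pin down the extremal configuration that produces the exact constant $11$.
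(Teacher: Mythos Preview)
Your overall strategy---partition the neighbors of $u$ into logarithmically many concentric annuli centered at $u$ and bound the number of pairwise independent neighbors in each annulus by a law-of-cosines angular argument---is exactly the approach the paper takes. The gap is in your choice of annulus ratio: with ratio $2$ the ``delicate point'' you flag at the end is not merely delicate but fatal.

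Concretely, place $v_i,v_j$ in a shell $[d,2d)$ with $a := d(u,v_i)$ close to $d$ and $b := d(u,v_j)$ close to $2d$. Your non-adjacency inequality gives only $d(v_i,v_j) > \min(a,b) = a$, and the law of cosines then yields
\[
\cos(\angle v_i u v_j)\;<\;\frac{a^2+b^2-a^2}{2ab}\;=\;\frac{b}{2a},
\]
which tends to $1$ as $b/a \to 2$. Hence the angle at $u$ can be made arbitrarily small, the near-collinear configuration is \emph{not} excluded, and no finite per-shell bound follows. The paper sidesteps this by taking the annulus ratio to be $\sqrt{3}$ rather than $2$: then $b/a<\sqrt{3}$ forces $\cos(\angle v_i u v_j)<\sqrt{3}/2$, i.e.\ $\angle v_i u v_j>\pi/6$, and so at most $11$ independent neighbors fit in a single annulus. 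The number of annuli becomes $\log_{\sqrt{3}}(R_{max}/R_{min})$, which is the intended reading of the logarithm in the statement. With this single adjustment of the ratio from $2$ to $\sqrt{3}$, the remainder of your plan goes through as written.
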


\begin{proof}

Let \(I(u)\) be the largest independent set in the subgraph induced by \({u} \cup N(u)\). We define a ?moat? \(b(i)\) for \(i \geq 0\), which is the annulus defined by circles of radii \(c^{i}R_{min}\) and \(c^{i+1}R_{min}\), centered at \(u\). Therefore we have \(n = \log_c{\frac{R_{max}}{R_{min}}}\) moats at max (\(c^nR_{min} = R_{max}\)). \(c=\sqrt[]{3}\) for ease of proof using geometric properties. Let \(N_i(u)\) be the be the neighbors of \(u\) that are in moat \(b(i)\). Let \(p,q\) be two vertices that are in \(N_i(u)\), and without lost of generality let \(R_p \leq R_q\).

The distance between \(p\) and \(q\) is at least \(min(R_p,R_q)=R_p\), since there is no edge between \(p\) and \(q\). We can "shrink" the circle centered at \(p\) with radius \(R_p\) until \(u\) is \textit{on the boundary} of the circle, \(R^{'}_p \leq R_p\) and \(R^{'}_p \geq c^{i}R_{min}\). The distance between \(q\) and \(u\) is at most \(c^{i+1}R_{min}\), which is \(\leq cR^{'}_p \). \(q\) is \textit{inside} the circle centered at \(u\) with radius \(cR^{'}_p\), but \textit{outside} the circle centered at \(p\) with radius \(R^{'}_p\) - this implies that \(q\) is in the crescent shaped shaded region.

In order to compute the angle $\alpha$ between $R^{'}_p$ and $c^{i+1}R_{min}$, we assume a triangle $\Delta R_p, cR^{'}_p, R^{'}_p$, as $c^{i+1}R_{min} \leq cR^{'}_p$ and $R_p < R^{'}_p$. Thus, $\arccos(\alpha) > (c^{2}{R^{'}_p}^{2}+{R^{'}_p}^{2}-{R^{'}_p}^{2}) / (2c{R^{'}_p}^{2}) = c^{2}/2c = \frac{\pi}{6}^{\circ}$ .

Under these circumstances, the angle between \(p\) and \(q\) at \(u\) (\(\alpha\)) is \(> \frac{\pi}{6}\), and therefore there cannot be more than 11 vertices in the moat \(b(i)\) \(\rightarrow\) $11\times(\log{\frac{R_{max}}{R_{min}}})$ (Figure \ref{fig6}). We proclaim $\log{\frac{R_{max}}{R_{min}}}$ is assumed in this article to be a constant as wireless networks tend to work in a relatively small $\log{\frac{R_{max}}{R_{min}}}$ proportions.

\end{proof}

\begin{lemma} \label{lem:indln}
Let $G = (V,E)$ be a graph, and let $L(G)$ be a line graph of G. The neighborhood independence of a vertex $u$ in $L(G)$ is at most $c=2$.
\end{lemma}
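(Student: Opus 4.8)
The plan is to exploit the defining structure of the line graph directly, reducing the claim to a clique-cover argument. A vertex $u$ of $L(G)$ corresponds to an edge $e = (x,y)$ of $G$, and by definition two vertices of $L(G)$ are adjacent precisely when their corresponding edges of $G$ share an endpoint. So first I would describe the neighborhood $N(u)$ of $u$ in $L(G)$: it consists of exactly those edges of $G$, distinct from $e$, that are incident to $x$ or to $y$.

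Next I would split $N(u)$ according to which endpoint of $e$ each neighbor touches: let $E_x$ be the set of edges incident to $x$ other than $e$, and $E_y$ the set of edges incident to $y$ other than $e$. The key observation is that any two edges in $E_x$ share the common vertex $x$ and are therefore adjacent in $L(G)$, so $E_x$ induces a clique in $L(G)$; symmetrically, $E_y$ induces a clique. Hence $N(u) = E_x \cup E_y$ is covered by two cliques.

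Finally, I would invoke the elementary fact that an independent set meets each clique in at most one vertex. Since $N(u)$ is covered by the two cliques $E_x$ and $E_y$, any independent set contained in $N(u)$ contains at most one vertex of $E_x$ and at most one of $E_y$, and so has size at most $2$. This bounds the neighborhood independence of $u$ by $c = 2$, which is exactly the claim.

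As for the main obstacle: there is no genuinely hard step here, since the bound is immediate once the clique-cover structure of $N(u)$ is noticed. The only point deserving a word of care is the implicit assumption that $G$ is simple, so that $E_x$ and $E_y$ are disjoint (the sole edge joining $x$ and $y$ being $e$ itself, which is excluded); for multigraphs a parallel copy of $e$ would merely lie in both cliques and cannot enlarge an independent set, so the bound $c = 2$ is unaffected.
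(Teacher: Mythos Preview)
Your proof is correct and follows essentially the same approach as the paper: both translate the neighborhood of $u$ in $L(G)$ into the set of edges sharing an endpoint with $e=(x,y)$, observe that an independent set among these corresponds to pairwise non-intersecting edges touching $e$, and conclude the bound of $2$. Your version makes the clique-cover structure $N(u)=E_x\cup E_y$ explicit, whereas the paper states the conclusion more tersely, but the underlying argument is the same.
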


\begin{proof}
A vertex $v$ in $L(G)$ corresponds to an edge $e_v \in G$. An independent set $I$ of neighbors of $v$ corresponds to a set of edges $J_I \subseteq E$, such that each $e \in J_I$ share a common endpoint with $e_v$, but each
pair of edges $e, e' \in J_I$ do not intersect. Therefore, $|J_I| \leq 2$, and consequently, $|I| \leq 2$. Thus, any line graph $L(G)$ has neighborhood independence bounded by 2.
\end{proof}


\section{Self-stabilizing Backup Placement in Graphs of Bounded Neighborhood Independence}
\label{selfstab}
In this section we devise a self-stabilizing backup placement algorithm in Dijkstra model of self-stabilization \cite{edsger1974dijkstra}. In this model each vertex has a ROM (Read Only Memory) that is failure free, and a RAM (Random Acess Memory) that is failure prone. An adversary can corrupt the RAM of all processors in any way. However, in certain periods of time, faults do not occur. These periods of time are not known to the processors. The goal of a distributed self-stabilizing algorithm is reaching a proper state in all processors, once faults stop occurring. Since these time points are not known, an algorithm is constantly executed by all processors. The stabilization time is the number of rounds from the beginning of a time period in which faults do not occur, until all processors reach a proper state, given that no additional faults occur during this time period.

Our algorithm stores only the $ID$ of a processor in its ROM. The RAM of each processor $v \in V$ stores its parent in the forest and its neighbor for a backup. These are stored in the variables $v.Parent$, $v.BP$, respectively. In each round, the forest is re-computed, and the backup placement is performed according to the rules of our algorithm, described in Section \ref{intro2}. The pseudocode of the self-stabilizing algorithm is provided in Algorithm \ref{algo5} below.

\begin{algorithm}[htb]
\caption{Self-Stabilizing Backup Placement}
\label{algo5}
\begin{algorithmic}[1]
\State {\bf foreach} node $v \in V$ in each round in parallel do:
\Procedure{Stab-BP($G = (V,E)$)}{}
\State send $v.ID$, $v.Parent$, $v.BP$ to all neighbors
\State receive all messages of the current round
\State {\bf /* Forest construction */}
\If{$v$ has a neighbor in $G$ with greater $ID$ than its own}
	\State $v.Parent \gets$ a neighbor of $v$ in $G$ with a greater $ID$ that is closest to its own.
\EndIf
\If{$v$ has no neighbor in $G$ with a greater $ID$ {\bf and} there exists a neighbor $u$ of $v$ in $G$ with $u.Parent = v$}
  \State $v.Parent \gets$ NULL
\EndIf
\If{$v$ has no neighbor in $G$ with a greater $ID$ {\bf and} there is no neighbor $u$ of $v$ in $G$ with $u.Parent = v$}
  \State $v.Parent \gets$ a neighbor in $G$ with the closest $ID$ to its own.
\EndIf 
\State {\bf /* Backup placement computation */}
\If{$v$ has a neighbor $u$ in $G$ with $u.parent = v$}
  \State $v.BP \gets$ such a neighbor $u$, where ties are broken by taking the smallest $ID$
\ElsIf {there is a neighbor $u$ of $v$ in $G$ with $u.parent = v.parent$ {\bf and} $u.ID < v.ID$}
   \State $v.BP \gets$ such a neighbor $u$ whose $ID$ is closest to that of $v$.
\Else \State $v.BP \gets v.Parent$
\EndIf
\EndProcedure
\end{algorithmic}
\end{algorithm}

Next, we prove that the algorithm stabilizes within three rounds, once faults stop occurring. First, we show that each vertex belongs to the forest, within two rounds from that moment. 

\begin{lemma}
Within two rounds from the moment when faults stop occurring, the set of variables $v.Parent$ of all $v \in V$ form a forest.
\end{lemma}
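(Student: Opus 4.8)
The plan is to show that the $v.Parent$ pointers computed by the self-stabilizing algorithm converge to a valid forest within two rounds. The key observation is that, once faults cease, the pointer each vertex sets depends only on information about its neighbors' $ID$s and the parent-claims they broadcast. The crucial asymmetry I would exploit is the one already used in Corollary \ref{cor:forestcover}: the parent of a vertex always has a strictly greater $ID$, \emph{except} for root vertices that are local $ID$-maxima, which must adopt a neighbor as parent only when they are not themselves selected as a parent. So I would classify vertices by whether they have a higher-$ID$ neighbor, and argue that each class stabilizes correctly.

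First I would analyze the first clean round. In this round every vertex $v$ receives the true $ID$s of all its neighbors (these sit in failure-free ROM, so they are correct immediately). Therefore, after one round, every vertex that \emph{has} a higher-$ID$ neighbor executes the first \texttt{if} branch and sets $v.Parent$ to the closest higher-$ID$ neighbor; this is a correct, permanent assignment that never changes in subsequent clean rounds, since it depends only on ROM data. The only vertices whose parent may still be wrong after round one are the local $ID$-maxima (vertices with no higher-$ID$ neighbor): their second and third \texttt{if} branches depend on the \emph{received} value $u.Parent$ of neighbors, which may still be corrupted after only one round.

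Then I would analyze the second clean round. By the end of round one, all non-maximal vertices have correct, stable $v.Parent$ values, so in round two every local $ID$-maximum $v$ receives correct parent-claims from all its neighbors. Now $v$ can correctly decide: if some neighbor $u$ genuinely has $u.Parent = v$, then $v$ is a genuine internal root and sets $v.Parent \gets \mathrm{NULL}$ (second branch); otherwise $v$ is an isolated root in all Panconesi--Rizzi forests and adopts its closest-$ID$ neighbor (third branch), which is well defined and does not create a cycle because, as argued in Appendix \ref{appendix:forest}, such roots form an independent set. Hence after the second clean round \emph{every} vertex holds its final, correct $v.Parent$ value, and these pointers coincide with the forest-cover construction of Corollary \ref{cor:forestcover}, which is acyclic and covers all vertices.

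The main obstacle I anticipate is handling the one-round delay in the propagation of parent-claims to the $ID$-maxima: a careless argument might claim correctness after a single round, but a maximum vertex cannot distinguish a genuine child from a neighbor whose corrupted RAM merely \emph{claims} $v$ as its parent until that neighbor has itself recomputed a correct parent. I would therefore be careful to establish the two-round sequencing explicitly — round one pins down all non-maximal parents, round two lets the maxima read trustworthy claims — and to confirm that the acyclicity and coverage arguments of Appendix \ref{appendix:forest} carry over verbatim once the pointers have stabilized, so that the resulting structure is indeed a forest rather than merely a cycle-free pointer assignment.
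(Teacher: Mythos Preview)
Your proposal is correct and follows essentially the same approach as the paper's own proof: both argue that non-maximal vertices stabilize in round one using only ROM data, that local $ID$-maxima form an independent set and therefore receive correct parent-claims from all (necessarily non-maximal) neighbors in round two, and that the resulting pointer structure is the acyclic forest cover of Corollary~\ref{cor:forestcover}. Your treatment is a bit more explicit about \emph{why} one round does not suffice (corrupted $u.Parent$ values in RAM), but the decomposition and the key ideas are the same.
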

\begin{proof}
In the first round, once faults stop occurring, each vertex whose $ID$ is not a local maximum selects a neighbor with a greater $ID$. Thus, the set of all such vertices form a forest in the end of this round. Since this step depends solely on information in the ROM, this step outcome is going to be the same in each faultless round. For the next step, denote the set of vertices with locally maximum $ID$s by $W$. Note that $W$ is an independent set, since otherwise there is a pair of neighbors in $W$. But neighbors cannot both have local maximum $ID$s. In the second round from the moment that faults stop occurring, each vertex in $W$ discovers whether it belongs to the forest. If so, it marks itself as a root without a parent. Otherwise, it selects a parents in the forest. Consequently, after two rounds, all vertices belong to the forest. Note that in the end of the first round, though the selections of $V \setminus W$ produce no cycles, there may be still cycles due to earlier faults of vertices of $W$. However, this is corrected in round $2$, after which there are no cycles at all. 
\end{proof}
In the following rounds, the selections of parents do not change for all $v \in V$, as long as there are no faults. Indeed, the forest construction steps result in the same outcome of each faultless round, after the second round. Hence in the third round a proper backup placement is formed.  Moreover, since it is computed using a forest of a graph with bounded neighborhood independence, according to the rules described in Section \ref{intro2}, we obtain the following result.
\begin{theorem}
In graphs with neighborhood independence bounded by a constant, our algorithm stabilizes within $3$ rounds and produces $O(1)$-backup-placement.
\end{theorem}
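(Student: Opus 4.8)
The plan is to assemble the theorem from three ingredients that are almost entirely in place: the preceding lemma, which guarantees a stable forest after two rounds; one extra round devoted to the backup-placement rules; and the load analysis already carried out for Algorithm \ref{algo2} on graphs of bounded neighborhood independence. Concretely, I would first appeal to the previous lemma to assert that within two rounds of the last fault the variables $v.Parent$ form a forest, and then strengthen this to two properties I need: that the forest is a \emph{forest cover} (every vertex belongs to some tree), and that it is \emph{stable}, i.e.\ reproduced unchanged in every later faultless round. Stability follows because the three forest-construction branches of Algorithm \ref{algo5} read only the $ID$ values stored in the failure-free ROM together with the neighbors' $Parent$ values, and after round two the latter already equal the forest parents. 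Coverage follows from the same case split used in the lemma: a non-local-maximum vertex selects a higher-$ID$ parent, and a local-maximum vertex is either kept as a root (if some neighbor points to it) or selects its closest-$ID$ neighbor.

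Next I would analyze round three. At its start each vertex receives the $Parent$ values computed in round two; by stability these coincide with the forest-cover parents, and the forest-construction step of round three recomputes exactly the same values. Hence, when a vertex evaluates the backup-placement branches, both its own $Parent$ and every received neighbor's $Parent$ are the correct, globally consistent forest values (and the possibly corrupted $v.BP$ from the fault period is simply overwritten). The three branches — pick a child of smallest $ID$; otherwise pick the closest sibling of smaller $ID$ adjacent in $G$; otherwise pick the parent — are precisely the rules of Algorithm \ref{algo2}. Therefore the output of round three coincides with an ordinary execution of General-BP on the stable forest cover.

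Finally I would transfer the load bound. Since the forest is a forest cover of a graph with neighborhood independence at most $c$, the Corollary following Algorithm \ref{algo2} bounds the load of General-BP by $2c+1$, which is $O(1)$ for constant $c$; as the optimum load is at least $1$ once any selection is made, this is an $O(1)$-approximation. Combining the two rounds of forest stabilization with the single round of placement computation gives stabilization within three rounds, which is the claim.

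The step I expect to be the main obstacle is the timing/consistency argument in round three: one must rule out a one-round lag in which a vertex's freshly recomputed $Parent$ disagrees with the message-delivered $Parent$ values of its neighbors, which would break the equivalence with Algorithm \ref{algo2}. This is exactly why the stability of the forest after round two is indispensable — it collapses any such potential discrepancy, so that every vertex in round three sees one and the same forest and faithfully reproduces the behavior already analyzed, making the load guarantee directly applicable.
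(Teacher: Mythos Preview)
Your proposal is correct and follows essentially the same approach as the paper: invoke the preceding lemma for a stable forest after two rounds, argue that the forest construction is thereafter idempotent, and then observe that round three reproduces Algorithm~\ref{algo2} on a forest cover so that the $2c+1$ load bound applies. If anything, you are more careful than the paper, which simply asserts that ``the forest construction steps result in the same outcome of each faultless round'' without spelling out the timing/consistency point you flag; your explicit treatment of that lag is a welcome addition rather than a deviation.
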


\end{appendices}

\end{document}